\documentclass[copyright,creativecommons]{eptcs}
\providecommand{\event}{AiML 2026} 

\usepackage{aiml26}

\usepackage{iftex}

\ifpdf
  \usepackage{underscore}         
  \usepackage[T1]{fontenc}        
\else
  \usepackage{breakurl}           
\fi

\usepackage{amsmath,amssymb,stmaryrd}

\renewcommand{\phi}{\varphi}
\newcommand{\slg}{\mathbf} 
\newcommand{\kf}{\mathfrak}  
\newcommand{\cl}{\mathcal} 
\newcommand{\set}[2]{
  \left\{ #1 \,:\, #2  \right\}%
} 
\newcommand{\PV}{\mathrm{Prop}}

\makeatletter
\newcommand{\dotop}[1]{{\mathpalette\dotop@{#1}}}
\newcommand{\dotop@}[2]{%
  \vphantom{#2}%
  \ooalign{$\m@th#1\mathop#2$\cr\hidewidth$\m@th#1\cdot$\hidewidth\cr}%
}
\makeatother
\newcommand{\Diamd}{{\mathop\lozenge}}
\newcommand{\Boxd}{{\mathop\square}}
\newcommand{\Boxc}{\dotop{\square}}

\newcommand{\Lang}{\cl{L}}
\newcommand{\Langc}{\cl{L}_c}
\newcommand{\fval}[1]{\llbracket #1 \rrbracket}
\newcommand{\eps}{\varepsilon}
\DeclareMathOperator{\Sub}{Sub} 
\newcommand{\Eq}{\Longleftrightarrow}
\newcommand{\refcl}[1]{\widehat{#1}} 
\newcommand{\proj}{g}

\usepackage{enumitem} 

\newcommand{\NewCase}{\vspace{\baselineskip}\noindent} 

\RequirePackage{array}
\newcommand\estyle{}%
\newenvironment{authors}[1]{%
  \begingroup
  \renewcommand\institute[1]{%
    \\\multicolumn{#1}{@{}c@{}}{\scriptsize
        \begin{tabular}[t]{@{}>{\footnotesize}c@{}}%
          ##1%
        \end{tabular}%
      }%
  }%
  \renewcommand\email[1]{%
    \gdef\estyle{\footnotesize\ttfamily}%
    \\##1%
    \gdef\estyle{}%
  }
  \begin{tabular}[t]{@{}*{#1}{>{\estyle}c}@{}}%
}{%
  \end{tabular}%
  \endgroup
}
\title{Topological Logics of Path-Reachability}
\author{%
  \begin{authors}{2}
    Aleksandr Gagarin & David Fern\'{a}ndez-Duque
    \institute{University of Barcelona, Spain}
    \email{agagarga7@alumnes.ub.edu & fernandez-duque@ub.edu}
  \end{authors}%
}
\newcommand{\titlerunning}{Topological Logics of Path-Reachability}
\newcommand{\authorrunning}{A. Gagarin \& D. Fern\'{a}ndez-Duque}

\hypersetup{
  bookmarksnumbered,
  pdftitle    = {\titlerunning},
  pdfauthor   = {\authorrunning},
  pdfsubject  = {Topological Logics of Path-Reachability},
  pdfkeywords = {spatial logic, topological semantics, cantor derivative, neighborhood semantics} 
}

\begin{document}
\maketitle

\begin{abstract}
The topological semantics of modal logic has been an active area of research ever since their introduction in the 1940s, with attention shifting in recent years from standard unimodal logic to more expressive frameworks. In particular, an Until-like path-reachability modality has recently been studied in Bezhanishvili et al. (2024) in polyhedral semantics; this paper investigates its topological counterpart. Focusing on the language combining said modality with the classical Cantor derivative modality, we exhibit an axiomatic system sound and complete both for the class of $T_1$ topologies and for the class of all metric spaces, and establish its decidability. 
We also axiomatize the logic of all topological models in a weaker language obtained by substituting the closure modality for the Cantor derivative. To prove our results, we introduce an equivalent neighborhood-like semantics allowing for the finite model property. 
\end{abstract}

\section{Introduction}

Topological semantics provide ways of interpreting modal formulas in spatial terms, by viewing propositions as regions of a topological space and modal operators as expressing some spatial transformations. Two classical variants of topological semantics, both introduced by McKinsey and Tarski~\cite{MT44}, are the \emph{c-semantics} and the \emph{d-semantics}.
In the c-semantics, the modal operators $\Diamond$ and $\Box$ are interpreted as closure and interior, respectively. The logic of all topological spaces then coincides with $\slg{S4}$. Moreover, $\slg{S4}$ is the logic of any dense-in-itself metric space~\cite{MT44, BB07}. 
In the more expressive d-semantics, the $\Diamond$ modality is interpreted as the Cantor derivative, which maps a region to the set of its limit points. The logic of all topologies is $\slg{wK4}$~\cite{BB07}. The separation axiom $T_D$ is modally definable by $\Box p \to \Box^2 p$, and $\slg{K4}=\slg{wK4}+\Box p \to \Box^2 p$ is the logic of metrizable Stone spaces~\cite{BEG10}.

An Until-like binary modality $\gamma$ for path-reachability was introduced in~\cite{BCGGLM22} in the context of so-called \emph{polyhedral models}. The formula $\gamma(\phi,\psi)$ is defined to be true at a point $x$ iff there exists a continuous path beginning at $x$, with all intermediate points validating $\phi$, and the end point validating $\psi$. 
In~\cite{BBCFG24}, the logic of polyhedral models as well as the logic of Alexandroff spaces with this modality were axiomatized and shown to be decidable. 
In polyhedral and Alexandroff models, $\gamma(\phi,\top)$ defines the closure of the set defined by $\phi$, thus $\gamma$-semantics subsumes c-semantics~\cite{BBCFG24}. 
This is not necessarily the case in an arbitrary topology, e.g., in any non-trivial totally path-disconnected space.

In this paper, we study logics related to $\gamma$ in a general topological setting. In the language combining $\gamma$ with c-semantics, we axiomatize the logic of all topologies and show that it coincides with the logic of metric spaces. In the language combining $\gamma$ with d-semantics, we exhibit a formula defining the separation axiom $T_1$, axiomatize the logic of all $T_1$ topologies, and show that it is complete for the class of metric spaces. Thus, both logics are also characterized by the class of Hausdorff spaces, by the class of regular Hausdorff spaces, etc. Further, we establish that these logics are decidable.

In proving our results, we introduce an equivalent neighborhood-like semantics (cf.~\cite{Pacuit}) for $\gamma$, where the set of intermediate points of a path is treated as a ``neighborhood'' of the path's original point. 
Then we establish the finite model property in this semantics by constructing an appropriate filtration of the canonical neighborhood-like model. In Sections~\ref{S:preliminaries}--\ref{S:unraveling}, we obtain our result for the d-semantics, while their c-semantical counterparts are derived in Section~\ref{S:interiorModality}.

\section{Preliminaries}\label{S:preliminaries}

In this section we establish some of the notation we will use. The powerset of $X$ is denoted $\mathcal{P}(X)$. For a transitive binary relation $R$ on a set $X$, we write $R(x) := \set{y \in X}{x\mathrel{R}y}$, and call
$U \subseteq X$ an \emph{$R$-upset} if $R(x) \subseteq U$ for all $x \in U$.

\textbf{Topology.} 
We use the following standard notions from general topology; see e.g.~\cite{Engelking} for definitions: topological space; open, closed, discrete, and connected subsets; continuous map; metric space and its induced topology.
A topological space $\langle X,\tau\rangle$ is called \emph{$T_1$} if for all distinct $x,y \in X$ there exists an open set $U$ with $x \in U$ and $y \not\in U$. Note that all metric spaces induce $T_1$ topologies~\cite{Engelking}.

When a subset of $\mathbb{R}$ is mentioned (e.g., $[0,1]$ or $\mathbb{Q}$), the usual metric given by $d(x,y):=|x-y|$ and its induced topology are implicit. 
A \emph{path} in a topological space $\langle X,\tau\rangle$ is a continuous map $[0,1] \to X$. 

\textbf{Miscellaneous.} 
For a map $\pi:[0,1] \to X$ and $a,b \in [0,1]$, denote by $\pi \restriction [a,b]$ the map $[0,1]\to X$ given by $t \mapsto \pi(bt-at+a)$. 
Observe that if a topology on $X$ is given and $\pi$ is a path, then $\pi \restriction [a,b]$ is also a path.
Note that we allow $b<a$; in particular, $\pi \restriction [1,0]$ is the ``reverse'' of $\pi$.  

For a set $X$ and $x,y \in X$, denote by $\iota_{x,y}$ the map $[0,1] \to X$ given by $\iota_{x,y}(0)=x$ and $\iota_{x,y}(t)=y$ for $t \in (0,1]$. In particular, $\iota_{x,x}$ is constant. Note that $\iota_{x,y}$ may or may not be continuous when some topology on $X$ is given.

\textbf{Language.}
Fix a countably infinite set of propositional variables $\PV$. Denote by $\Lang$ the language built from $\PV$ and $\bot$ using the unary connective $\Boxd$ and binary connectives $\to$ and $\gamma$. 
For a formula $\phi\in\Lang$, denote by $\Sub\phi$ the set of its subformulas.

\textbf{Semantics.} 
We interpret $\Box$ as in the usual topological $d$-semantics~\cite{BB07,BEG10}, and $\gamma$ as the reachability modality from~\cite{BBCFG24, BCGGLM22}. Formally, a \emph{topological model} is a triple $\kf{M}=\langle X,\tau,\fval{\cdot}\rangle$, where $\langle X, \tau\rangle$ is a topological space, and $\fval{\cdot} : \PV \to \cl{P}(X)$. The map $\fval{\cdot}$ is extended to $\Lang$ as follows:
\begin{itemize}

	\item $\fval{\bot}=\varnothing$;
	\item $\fval{\phi \to \psi} = (X \setminus \fval{\phi}) \cup \fval{\psi}$;
	\item $x \in \fval{\Boxd \phi}$ iff there exists $U \in \tau$ such that $x \in U$ and $U \setminus \{x\} \subseteq \fval{\phi}$; 
	\item $x \in \fval{\gamma(\phi,\psi)}$ iff there exists a path $\pi$ with $\pi(0) = x$, $\pi[(0,1)] \subseteq \fval{\phi}$, and $\pi(1) \in \fval{\psi}$. 
\end{itemize}
We write $\kf{M} \models \phi$ iff $\fval{\phi}=X$.

\textbf{Abbreviations.}
We use standard abbreviations $\top$, $\neg$, $\vee$, and $\wedge$. We also abbreviate $\Boxc \phi := \phi \wedge \Boxd \phi$ and $\widehat\gamma(\phi,\psi) := \phi\wedge \gamma(\phi,\phi\wedge\psi)$.
Note that in a topological model:
\begin{itemize}
	\item $\fval{\Boxc\phi}$ is the interior of $\fval{\phi}$. In particular, all $\fval{\Boxc\phi}$ are open sets;
	\item $x \in \fval{\widehat\gamma(\phi,\psi)}$ iff there exists a path $\pi$ with $\pi(0)=x$, ${\pi[[0,1]] \subseteq \fval{\phi}}$, and $\pi(1) \in \fval{\psi}$. 
\end{itemize}

\section{Axiomatization}

In this section, we present an $\Lang$-formula defining the separation axiom $T_1$. We also exhibit an axiomatic system $\slg{TLR}$ and verify its soundness for $T_1$ topologies. 

\begin{proposition}\label{P:t1Characterization}
A topological space $\langle X,\tau\rangle$ is $T_1$ if and only if $\gamma(p\wedge \Boxd \neg p, \top) \to p$ is valid in all topological models based on $\langle X,\tau\rangle$.
\end{proposition}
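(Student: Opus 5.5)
We prove the two directions separately. The semantics of $\gamma(p\wedge\Boxd\neg p,\top)$ at $x$ says there is a path $\pi$ from $x$ whose points $\pi(t)$, $t\in(0,1)$, all satisfy $p$ and are isolated from the rest of $\fval{p}$: each such point has an open neighbourhood containing no other point of $\fval{p}$.

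For the left-to-right direction, assume $\langle X,\tau\rangle$ is $T_1$. Fix a valuation and a point $x$ with $x\in\fval{\gamma(p\wedge \Boxd\neg p,\top)}$, witnessed by a path $\pi$. Let $A:=\pi[(0,1)]\subseteq \fval{p}$. We claim $\pi$ is constant on $(0,1)$.

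First, every point $a\in A$ has an open $U_a$ with $U_a\cap \fval{p}\subseteq\{a\}$. Hence $\pi^{-1}(U_a)\cap(0,1)=\pi^{-1}(a)\cap(0,1)$, so each fibre $\pi^{-1}(a)\cap(0,1)$ is open in $(0,1)$.

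Second, by $T_1$ each singleton $\{a\}$ is closed. So each fibre is also closed in $(0,1)$.

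Since $(0,1)$ is connected and $A$ is nonempty, $\pi$ is constant on $(0,1)$ with some value $a$. The preimage $\pi^{-1}(\{a\})$ is closed in $[0,1]$ and contains $(0,1)$, so it contains $0$. Thus $x=\pi(0)=a\in\fval{p}$, and the formula is valid.

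For the right-to-left direction, assume $\langle X,\tau\rangle$ is not $T_1$. Then there are distinct $x,y$ such that every open set containing $x$ contains $y$. Define the valuation $\fval{p}:=\{y\}$ and take the path $\iota_{x,y}$. We check:

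\begin{itemize}
  \item \emph{$\iota_{x,y}$ is continuous.} Let $U$ be open. If $y\notin U$, then $x\notin U$, so the preimage is $\varnothing$. If $y\in U$, the preimage is $(0,1]$ or $[0,1]$, both open in $[0,1]$.
  \item \emph{$y\in\fval{p\wedge\Boxd\neg p}$.} We have $y\in\fval{p}$, and $X$ itself is an open set with $X\setminus\{y\}\subseteq\fval{\neg p}$.
\end{itemize}

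Hence $x\in\fval{\gamma(p\wedge\Boxd\neg p,\top)}$ while $x\notin\fval{p}$, so the formula fails at $x$.

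The only step needing care is the left-to-right connectedness argument, in particular identifying where $T_1$ enters: it gives closedness of the fibres. Without $T_1$ the fibres are merely open, and the continuity of $\iota_{x,y}$ in the converse exploits exactly that gap.
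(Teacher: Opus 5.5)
Your proof is correct and follows the paper's argument: discreteness of $\fval{p\wedge\Boxd\neg p}$ plus connectedness forces $\pi$ to be constant on $(0,1)$, $T_1$ then forces $\pi(0)$ to equal that constant value, and the converse uses $\fval{p}:=\{y\}$ with the continuous map $\iota_{x,y}$. One small correction to your closing remark: invoking $T_1$ to make the fibres closed is unnecessary, since open fibres partitioning $(0,1)$ are automatically clopen; $T_1$ is genuinely needed only for the step from $\pi[(0,1)]=\{a\}$ to $\pi(0)=a$, and that is exactly the step $\iota_{x,y}$ breaks in the non-$T_1$ case.
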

\begin{proof}
Suppose $\langle X,\tau\rangle$ is not $T_1$, i.e., there exist distinct $x,y\in X$ such that every open neighborhood of $x$ contains $y$, and thus $\iota_{x,y}$ is continuous. Take $\fval{p} := \{y\}$.
Now $y \in \fval{\Boxd \neg p}$, whence $\iota_{x,y}$ witnesses $x \in \fval{\gamma(p\wedge \Boxd \neg p,\top)}$. However, $x \not\in \fval{p}$.

Conversely, suppose $\langle X,\tau\rangle$ is $T_1$. Consider a valuation $\fval{\cdot}$ and a point $x$ with $x \in \fval{\gamma(p\wedge\Boxd\neg p,\top)}$ witnessed by some path $\pi$. The set $\fval{p\wedge \Boxd \neg p}$ is discrete, whence its connected subset $\pi[(0,1)]$ is some singleton $\{y\}$. Now, $\pi$ is continuous with $\pi(0)=x$ and $\pi[(0,1)]=y$, thus every open neighborhood of $x$ contains $y$, 
hence $x=y \in \fval{p}$. 
\end{proof}

\begin{definition}
The logic $\slg{TLR}$ is defined by all axioms and rules of $\slg{K4}$ for $\Boxd$ together with the following:
\begin{itemize}
	\item[(A0)] $\gamma(\phi, \psi_1 \vee \psi_2) \to \gamma(\phi, \psi_1) \vee\gamma(\phi,\psi_2)$;
	\item[(A1)] $\gamma(\phi, \phi \wedge \gamma(\phi,\psi)) \to \gamma(\phi,\psi)$;
	\item[(A2)] $\phi \to \gamma(\phi,\phi)$;
	\item[(A3)] $\neg \gamma(\bot,\top)$;
	\item[(A4)] $\gamma (\phi, \neg \gamma(\phi,\psi)) \to \neg \psi$;
	\item[(A5)] $\gamma(\phi, \psi) \to \gamma(\phi \wedge \gamma(\phi,\psi),\psi)$;
	\item[(A6)] $\Boxc \psi \wedge \gamma(\phi,\chi) \to \gamma(\phi \wedge\Boxc\psi, (\phi \wedge \neg \Boxc\psi) \vee\chi)$; 
	\item[(A7)] $\widehat\gamma\big(\chi\wedge (\Boxc\phi_1\vee\Boxc\phi_2)\wedge(\widehat\gamma(\chi\wedge\Boxc\phi_1,\psi)\vee \widehat\gamma(\chi\wedge\Boxc\phi_2,\psi)\to\psi), \psi\big) \to \psi$; 
	\item[(A8)] $\gamma(\phi \wedge \Boxd \neg \phi, \top) \to \phi$;
	\item[(Mon)] $\begin{array}{r} \phi\to\phi' \quad \psi\to\psi' \\ \cline{1-1} \gamma(\phi,\psi)\to\gamma(\phi',\psi')\\ \end{array}$.
\end{itemize}
\end{definition}

Now we verify soundness using some natural properties of paths in a topology; e.g., (A1) expresses that a concatenation of two paths is also a path, (A2) signifies the existence of constant paths, etc. 

\begin{proposition}\label{P:soundness}
The logic $\slg{TLR}$ is sound for the class of $T_1$ topologies. Moreover, (A0)--(A7) and (Mon) are valid in all topologies.
\end{proposition}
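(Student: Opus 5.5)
Soundness is checked axiom by axiom. The $\slg{K4}$ part for $\Boxd$ is valid on $T_1$ spaces: $T_1$ implies $T_D$, and $\Box p\to\Box^2 p$ defines $T_D$, as recalled in the introduction. (A8) is exactly Proposition~\ref{P:t1Characterization} with $\phi$ substituted for $p$. Validity is preserved by modus ponens and necessitation, so it remains to verify (A0)--(A7) and (Mon) on arbitrary topological spaces; this also gives the second claim.

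The routine cases come first.
\begin{itemize}
\item (Mon) is immediate: any path witnessing $\gamma(\phi,\psi)$ also witnesses $\gamma(\phi',\psi')$.
\item (A0) holds because the endpoint lies in $\fval{\psi_1}$ or in $\fval{\psi_2}$.
\item (A3) holds because $(0,1)\neq\varnothing$.
\item (A2) uses the constant path $\iota_{x,x}$.
\item (A1) is concatenation. Given $\pi$ from $x$ through $\fval\phi$ to a point $y\in\fval{\phi\wedge\gamma(\phi,\psi)}$, and $\rho$ from $y$, glue the two halves at $t=1/2$. The pasting lemma gives continuity, and the midpoint $y$ satisfies $\phi$.
\item (A4) is reversal. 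Contrapositively, suppose $x\in\fval\psi$ and $\pi$ is a path from $x$ with $\pi[(0,1)]\subseteq\fval\phi$ and endpoint $y$. Then $\pi\restriction[1,0]$ witnesses $y\in\fval{\gamma(\phi,\psi)}$, so the endpoint cannot satisfy $\neg\gamma(\phi,\psi)$.
\item (A5) uses tails. For $t\in(0,1)$, the path $\pi\restriction[t,1]$ witnesses $\pi(t)\in\fval{\gamma(\phi,\psi)}$.
\end{itemize}

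For (A6), let $x\in U:=\fval{\Boxc\psi}$, which is open, and let $\pi$ witness $\gamma(\phi,\chi)$. If $\pi[(0,1)]\subseteq U$, then $\pi$ itself works. Otherwise set $s:=\inf\set{t\in(0,1)}{\pi(t)\notin U}$. Since $\pi^{-1}[U]$ is open and contains $0$, we get $s>0$, and $\pi(s)\notin U$ by the choice of infimum (the complement of $U$ is closed). Moreover $s<1$, so $\pi(s)\in\fval\phi$. Then $\pi\restriction[0,s]$ witnesses $\gamma(\phi\wedge\Boxc\psi,\phi\wedge\neg\Boxc\psi)$.

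The main obstacle is (A7), a connectedness/induction-type axiom. Write $U_i:=\fval{\Boxc\phi_i}$ (open) and $C:=\fval\chi$. Suppose $\pi$ is a path with $\pi(0)=x$, whose whole image lies in $C\cap(U_1\cup U_2)$ and satisfies the implication, with $\pi(1)\in\fval\psi$. I will show $x\in\fval\psi$. Let $S:=\set{t\in[0,1]}{\pi(t)\in\fval\psi}$ and $s:=\inf S$, aiming to prove $s=0$ and $0\in S$. The point $\pi(s)$ lies in some $U_i$, so by continuity a neighbourhood $J=(s-\eps,s+\eps)\cap[0,1]$ maps into $U_i$.
\begin{itemize}
\item Take $t\in S\cap J$ with $t\ge s$, which exists by the definition of infimum, and use $t=1$ if $s=1$.
\item For each $r\in J$ with $r\le t$, the restricted path $\pi\restriction[r,t]$ lies in $C\cap U_i$ and ends in $\fval\psi$. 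Hence $\pi(r)\in\fval{\widehat\gamma(\chi\wedge\Boxc\phi_i,\psi)}$.
\item The implication holding along $\pi$ then forces $\pi(r)\in\fval\psi$.
\end{itemize}
So all $r\in J$ with $r\le s$ lie in $S$. This contradicts $s>0$, since points of $J$ below $s$ would then belong to $S$. Hence $s=0$, and the same argument with $r=0$ gives $0\in S$, i.e.\ $x\in\fval\psi$. The delicate points are that $\widehat\gamma$ requires the whole image, including endpoints, to satisfy $\chi\wedge\Boxc\phi_i$, and the edge cases $s=1$ and $t=s$. In the latter case the trivial constant path $\iota_{\pi(s),\pi(s)}$ handles it.
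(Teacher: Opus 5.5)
Your proof is correct and follows the paper's argument essentially step by step: concatenation, reversal, tails and constant paths for (A1)--(A5), a first exit time from the open set $\fval{\Boxc\psi}$ for (A6), and Proposition~\ref{P:t1Characterization} for (A8). The only variation is cosmetic. In (A7) you take the infimum of the times at which $\psi$ holds and propagate $\psi$ to the left, whereas the paper assumes $x\notin\fval{\psi}$ and takes the supremum of $\pi^{-1}[\fval{\neg\psi}]$; both rest on the same local $\widehat\gamma(\chi\wedge\Boxc\phi_i,\psi)$-witness inside a neighbourhood mapped into a single $\fval{\Boxc\phi_i}$.
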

\begin{proof}
Consider a topological model $\kf{M}=\langle X, \tau,\fval{\cdot}\rangle$ and a point $x \in X$.

(A0): If $x \in\fval{\gamma(\phi,\psi_1\vee\psi_2)}$ is witnessed by a path $\pi$, then $\pi(1) \in \fval{\psi_1 \vee \psi_2}$, hence  $\pi(1) \in \fval{\psi_i}$ for some $i \in \{1,2\}$, whence $\pi$ witnesses $x \in\fval{\gamma(\phi,\psi_i)}$.

(A1): Suppose $x \in \fval{\gamma(\phi,\phi\wedge \gamma(\phi,\psi))}$ is witnessed by $\pi_1$, and $\pi_1(1) \in\fval{\gamma(\phi,\psi)}$ is witnessed by $\pi_2$. Now the path $\pi$ given by $\pi \restriction [0,\frac12]=\pi_1$ and $\pi \restriction [\frac12,1]=\pi_2$ witnesses $x \in \fval{\gamma(\phi,\psi)}$.

(A2): If $x \in \fval{\phi}$, then $\iota_{x,x}$ witnesses $x \in \fval{\gamma(\phi,\phi)}$.

(A3) is valid since $\pi[(0,1)] \neq\varnothing$ for every path $\pi$.

(A4): Suppose $x \in \fval{\gamma(\phi,\neg\gamma(\phi,\psi))}$ is witnessed by a path $\pi$, and assume $x \in\fval{\psi}$. Then $\pi \restriction [1,0]$ witnesses that $\pi(1) \in \fval{\gamma(\phi,\psi)}$, which is a contradiction.

(A5): Suppose $x \in \fval{\gamma(\phi, \psi)}$ is witnessed by a path $\pi$. Then for every $t \in (0,1)$, the path $\pi \restriction [t,1]$ witnesses $\pi(t) \in \fval{\gamma(\phi,\psi)}$. Therefore, $\pi$ witnesses  $x \in \fval{\gamma(\phi\wedge \gamma(\phi,\psi), \psi)}$.

(A6): Suppose $x \in\fval{\Boxc \psi}$ and $x \in \fval{\gamma(\phi,\chi)}$ is witnessed by a path $\pi$. If $\pi[(0,1)] \subseteq \fval{\Boxc \psi}$, then $\pi$ witnesses  $x \in \fval{\gamma(\phi\wedge \Boxc \psi, \chi)}$. Otherwise, denote by $t$ the least element of the closed set $\pi^{-1}[\fval{\neg \Boxc\psi}]$. As $0<t<1$, 
the path $\pi \restriction [0,t]$ witnesses that $x \in \fval{\gamma(\phi\wedge \Boxc \psi, \phi \wedge \neg \Boxc\psi)}$.

(A7): Suppose $x \in \fval{\widehat\gamma(\chi\wedge (\Boxc\phi_1\vee\Boxc\phi_2)\wedge(\widehat\gamma(\chi\wedge\Boxc\phi_1,\psi)\vee \widehat\gamma(\chi\wedge\Boxc\phi_2,\psi)\to\psi), \psi)}$ is witnessed by a path $\pi$, and assume $x \not\in\fval{\psi}$. Denote by $t$ the supremum of $\pi^{-1}[\fval{\neg\psi}]$. Fix $i \in \{1,2\}$ such that $\pi(t) \in \fval{\Boxc\phi_i}$ and $\eps>0$ such that $[\max\{t-\eps,0\},\min\{t+\eps,1\}] \subseteq \pi^{-1}[\fval{\Boxc\phi_i}]$. 
Now $\pi \restriction [t,\min\{t+\eps,1\}]$ witnesses that $\pi(t) \in \fval{\widehat\gamma(\chi\wedge\Boxc\phi_i, \psi)}$.
As $\pi[[0,1]] \subseteq \fval{\widehat\gamma(\chi\wedge \Boxc\phi_i, \psi) \to \psi}$, we conclude that $\pi(t) \in \fval{\psi}$. By the choice of $t$, there exists $t' \in \pi^{-1}[\fval{\neg \psi}] \cap [t-\eps,t]$, thus $\pi \restriction [t',t]$ witnesses $\pi(t') \in \fval{\widehat\gamma(\chi\wedge \Boxc\phi_i,\psi)}$, hence $\pi(t') \in \fval{\psi}$, which is a contradiction. 

(Mon): If $\fval{\phi}\subseteq \fval{\phi'}$ and $\fval{\psi}\subseteq\fval{\psi'}$, then $\fval{\gamma(\phi,\psi)}\subseteq\fval{\gamma(\phi',\psi')}$.

If $\langle X,\tau\rangle$ is $T_1$, then the validity of (A8) follows from Proposition~\ref{P:t1Characterization}, and the validity of $\slg{K4}$ is well-known~\cite{BB07}.
\end{proof}

\section{Neighborhood-like Semantics}\label{S:canonical}

In this section, we define an alternative semantics for $\Lang$ (\emph{flanked Kripke frames}) and a certain class of finite frames in it (\emph{suitable frames}). In the following sections, we will establish our results by proving that $\slg{TLR}$ is complete for suitable frames (Section~\ref{S:fmp}) and that suitable frames define the same logic as $T_1$ topologies (Section~\ref{S:unraveling}). 

\begin{definition}
A \emph{flanked Kripke frame} is a triple $\langle X, R, P \rangle$, where:
\begin{itemize}
	\item $R \subseteq X \times X$ is transitive; 
	\item $P \subseteq X \times \cl{P}(X) \times X$ is \emph{monotone}, i.e., if $\langle x, S,y\rangle \in P$ and $S \subseteq S' \subseteq X$, then $\langle x,S',y\rangle\in P$.
\end{itemize}
A \emph{flanked Kripke model} is a flanked Kripke frame equipped with a valuation $\fval{\cdot} : \PV \to \cl{P}(X)$, which is extended to $\Lang$ as follows: 
\begin{itemize}
	\item $\fval{\bot} = \varnothing$; 
	\item $\fval{\phi\to\psi} = (X \setminus \fval{\phi}) \cup \fval{\psi}$;
	\item $x \in\fval{\Boxd\phi}$ iff $R(x) \subseteq \fval{\phi}$;
	\item $x \in \fval{\gamma(\phi,\psi)}$ iff there exists $y \in \fval{\psi}$ satisfying $\langle x,\fval{\phi},y \rangle \in P$.
\end{itemize}
\end{definition}

\begin{remark}\label{R:semanticMotivation}
This semantics combines a Kripke structure $R$ for $\Box$ and and a ``flanked'' structure $P$ for $\gamma$. The former is natural to consider as $\slg{K4}$ is the logic of $T_1$ topologies~\cite{BB07}. To motivate the latter, note that the topological interpretation of $\gamma$ may be rewritten as follows: $x \in \fval{\gamma(\phi,\psi)}$ iff there exists $y \in \fval{\psi}$ with $\langle x,\fval{\phi},y\rangle \in P$, where $P$ is a monotone set defined as follows:
$$
	P := \set{\langle x, S,y\rangle \in X \times \mathcal{P}(X) \times X}{\text{there exists a path $\pi$ with $\pi(0)=x$, $\pi[(0,1)]\subseteq S$, and $\pi(1)=y$}}.
$$
 \end{remark}

This ``flanked'' semantics for $\gamma$ is similar to the monotone neighborhood semantics used for non-normal modal logics such as $\slg{EM}$ (cf.~\cite{Pacuit}). 

We now list the frame properties that roughly correspond to the axioms (A1)--(A8). 

\begin{definition}
A flanked Kripke frame $\langle X,R,P\rangle$ is \emph{suitable} if it is finite and has the following 8 properties:
\begin{itemize}
	\item[(F1)] if $\langle x, S,z\rangle \in P$, $\langle z,S,y\rangle \in P$, and $z \in S$, then $\langle x, S,y\rangle \in P$; 
	\item[(F2)] $\langle x, \{x\}, x \rangle \in P$ for all $x \in X$;
\end{itemize}
and for all $\langle x,S,y\rangle \in P$:
\begin{itemize}
	\item[(F3)]  $S \neq \varnothing$;
	\item[(F4)] $\langle y,S,x\rangle \in P$;
	\item[(F5)] for each $z \in S$, either $\langle x, S \setminus \{z\}, y \rangle \in P$ or $\langle z, S, y\rangle \in P$; 
	\item[(F6)]  for every $R$-upset $U$ with $x \in U$, there exists $z \in (S \setminus U) \cup \{y\}$ with $\langle x,S \cap U, z\rangle \in P$; 
	\item[(F7)] if $x,y\in S$, then, for all $R$-upsets $U_1$ and $U_2$ with $S \subseteq U_1 \cup U_2$, there exists an $\langle S,U_1,U_2,x,y\rangle$-sequence. Here an \emph{$\langle S,U_1,U_2,x,y\rangle$-sequence} is a sequence $x_0,\dotsc,x_k$ with $k \geq 0$, $x_0=x$, and $x_k=y$ such that for each $i<k$ there exists $j_i \in \{1,2\}$ satisfying $x_i,x_{i+1} \in S \cap U_{j_i}$ and $\langle x_i, S \cap U_{j_i}, x_{i+1} \rangle \in P$; 
	\item[(F8)] if $S=\{y\}$, then either $x = y$ or $yRy$. 
\end{itemize}
\end{definition}

In the remainder of this section, we briefly motivate properties (F1)--(F8) and exhibit some illustrative examples.

\begin{remark}
Note that $R$-upsets form an Alexandroff topology on $X$ (cf.~\cite{BB07}).
Given a topological space, the set $P$ from Remark~\ref{R:semanticMotivation} satisfies (F1)--(F5), as well as (F6) and (F7) if we substitute ``open set'' for ``$R$-upset.'' One can verify this similarly to the proof of Proposition~\ref{P:soundness}, invoking the same topological properties.
\end{remark}

\begin{remark}
All axioms and rules of $\slg{K4}$, axiom (A0), and rule (Mon) are valid in all flanked Kripke frames.
One can check that (F1)--(F4) and (F6)--(F7) are precisely the frame properties corresponding to (A1)--(A4) and (A6)--(A7). However, (F5) is defined by (A5) only over finite frames, and (F8) is defined by (A8) only over frames satisfying (F3)--(F6). We employ (F5) and (F8) merely for convenience.  
\end{remark}

\begin{example}\label{E:positive}
Take $X=\{a,b,c\}$, $R=(\{b,c\} \times X) \cup \{\langle a,a\rangle\}$, and $\langle x,S,y\rangle\in P$ iff (i) $x=y \in S$, (ii) $S=X$, or (iii) $a \in S$ and $x,y \in \{a,b\}$. 
A direct inspection shows that $\langle X,R,P\rangle$ is suitable. 
If we omit $\langle a,a\rangle$ from $R$, the resulting frame refutes (F8) since $\langle b,\{a\},a\rangle \in P$, but (F1)--(F7) are unaffected. 
If we omit (iii) from $P$, the resulting frame satisfies (F1)--(F5), (F7), and (F8), but not (F6). 
\end{example}

\begin{example}
Take $X = \{a,b,c,d,e,f\}$, $R = \{a,b\}^2 \cup \{c,d\}^2 \cup (X \times \{e,f\})$, and $\langle x,S,y\rangle \in P$ iff
(i) $x=y \in S$,
(ii) $x,y \in \{a,c,e\}$ and $e \in S$,
(iii) $x,y \in \{b,d,f\}$ and $f \in S$, or
(iv) $S=X$.
This $\langle X,R,P\rangle$ refutes (F7) since $\langle a,X,f\rangle \in P$ but there are no $\langle X,\{a,b,e,f\}, \{c,d,e,f\}, a,f\rangle$-sequences.
However, one can verify (F1)--(F6) and (F8). 
\end{example}

\begin{example}\label{E:allTopologies}
The axiomatic system obtained from $\slg{TLR}$ by dropping (A8) is sound but not complete for the class of $T_D$ spaces. Indeed, one can verify that $\phi := \gamma(p\wedge \Box\neg p\wedge \gamma(q,\top), \top) \to \gamma(q,\top)$ is valid in all topological models. 
Now take $X=\{a,b,c\}$, $R=\{\langle a,b\rangle, \langle b,c\rangle, \langle a,c\rangle\}$, and $\langle x,S,y\rangle \in P$ iff (i) $x=y \in S$, (ii) $x,y \in \{a,b\}$ and $b \in S$, (iii) $x,y \in \{b,c\}$ and $c \in S$, or (iv) $x,y \in \{a,b,c\}$ and $\{b,c\}\subseteq S$. As seen by inspection, $\langle X,R,P\rangle$ satisfies (F1)--(F7) and thus validates said axiomatic system. Taking $\fval{p} := \{b\}$ and $\fval{q} := \{c\}$, we obtain $a \not\in \fval{\phi}$, whence $\phi$ is not derivable.
\end{example}

\section{Finite Model Property}\label{S:fmp}

In this section, we verify that $\slg{TLR}$ has the finite model property in the flanked Kripke semantics, i.e., that $\slg{TLR}$ is complete for the class of suitable frames (Proposition~\ref{P:flankedFMP}).
First, we define the canonical flanked Kripke model, similarly to canonical models in monotone neighborhood semantics (cf.~\cite{Pacuit}). Then we construct a suitable filtration. 

\begin{definition}
A set $\Gamma \subseteq \Lang$ is \emph{$\slg{TLR}$-consistent} if there exists no finite $\Gamma' \subseteq \Gamma$ with $\slg{TLR} \vdash \bigwedge \Gamma' \to \bot$. 
The \emph{canonical flanked Kripke model} is $\kf{M}_c = \langle X_c, R_c, P_c, \fval{\cdot}_c\rangle$, where:
\begin{itemize}
	\item $X_c$ is the class of all maximal $\slg{TLR}$-consistent sets;
	\item $\Gamma \mathrel{R_c} \Delta$ iff for all $\Boxd \phi \in \Gamma$ we have $\phi \in \Delta$;
	\item $\langle \Gamma, S, \Delta\rangle \in P_c$ iff for all $\phi \in \bigcap S$ and $\psi \in \Delta$ we have $\gamma(\phi,\psi) \in \Gamma$; 
	\item $\Gamma \in \fval{p}_c$ iff $p \in \Gamma$.
\end{itemize}
\end{definition}

\begin{lemma}
$\kf{M}_c$ is indeed a flanked Kripke model.
\end{lemma}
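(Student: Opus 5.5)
We must check two things: that $R_c$ is transitive, and that $P_c$ is monotone. The valuation $\fval{\cdot}_c$ is just a map $\PV \to \cl{P}(X_c)$, so nothing needs checking there. (One might also worry that $X_c$ is a proper class; since $\Lang$ is countable, $X_c$ is a subset of $\cl{P}(\Lang)$, hence a set.)

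\emph{Transitivity of $R_c$.} This is the standard canonical-model argument for $\slg{K4}$, and $\slg{TLR}$ contains $\slg{K4}$ for $\Boxd$. Suppose $\Gamma \mathrel{R_c} \Delta$ and $\Delta \mathrel{R_c} \Theta$, and let $\Boxd\phi \in \Gamma$. The axiom $\Boxd\phi\to\Boxd\Boxd\phi$ is derivable, and maximal consistent sets are closed under derivable implications, so $\Boxd\Boxd\phi\in\Gamma$. Applying the definition of $R_c$ twice gives $\Boxd\phi\in\Delta$ and then $\phi\in\Theta$, so $\Gamma \mathrel{R_c} \Theta$.

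\emph{Monotonicity of $P_c$.} Let $\langle\Gamma,S,\Delta\rangle\in P_c$ and $S\subseteq S'\subseteq X_c$. Enlarging a family of sets can only shrink its intersection, so $\bigcap S'\subseteq\bigcap S$. Take any $\phi\in\bigcap S'$ and $\psi\in\Delta$. Then $\phi\in\bigcap S$, and the definition of $P_c$ for $S$ gives $\gamma(\phi,\psi)\in\Gamma$. Hence $\langle\Gamma,S',\Delta\rangle\in P_c$.

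The only subtle case is $S'=\varnothing$. There we adopt the convention $\bigcap\varnothing=\Lang$, the natural choice since $\bigcap$ here is taken inside the formula set. The inclusion $\bigcap S'\subseteq\bigcap S$ still holds, so the argument above goes through unchanged. Note also that $S'=\varnothing$ forces $S=\varnothing$, so this case is in any event trivial. None of these steps is a genuine obstacle; the lemma is purely definitional.
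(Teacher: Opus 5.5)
Your proposal is correct and follows the paper's proof: transitivity of $R_c$ comes from the $\slg{K4}$ axiom $\Boxd\phi\to\Boxd\Boxd\phi$, and monotonicity of $P_c$ comes from the fact that enlarging $S$ can only shrink $\bigcap S$, which the paper simply calls clear. Your extra remarks, that $X_c$ is a set and that the case $S'=\varnothing$ is trivial, are harmless additions.
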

\begin{proof}
Since $\slg{TLR} \vdash \Boxd\phi\to\Boxd^2\phi$, the relation $R_c$ is transitive. It should be clear that $P_c$ is monotone.
\end{proof}

As for the usual neighborhood semantics of~\cite{Pacuit}, we have the following:

\begin{lemma}\label{L:canonicalModelTruth}
For all $\Gamma\in X_c$ and $\chi \in \Lang$, we have $\Gamma \in \fval{\chi}_c$ iff $\chi \in \Gamma$.
\end{lemma}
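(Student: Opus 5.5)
The argument is by induction on the structure of $\chi$. The cases $\chi = p$, $\chi=\bot$ and $\chi = \phi\to\psi$ are immediate from the definition of $\fval{\cdot}_c$ and the standard properties of maximal consistent sets. For $\chi = \Boxd\phi$ we use the usual canonical-model argument for normal modal logics: $\slg{TLR}$ contains $\slg{K4}$ for $\Boxd$, so the standard existence lemma applies. If $\Boxd\phi\notin\Gamma$, then $\set{\theta}{\Boxd\theta\in\Gamma}\cup\{\neg\phi\}$ is consistent, and any maximal extension $\Delta$ satisfies $\Gamma \mathrel{R_c}\Delta$ and $\phi\notin\Delta$. The converse direction is trivial. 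The only genuinely new case is $\chi=\gamma(\phi,\psi)$, where the induction hypothesis gives $\fval{\phi}_c = \set{\Delta}{\phi\in\Delta}$ and $\fval{\psi}_c=\set{\Delta}{\psi\in\Delta}$.

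\emph{Right to left.} Suppose $\gamma(\phi,\psi)\in\Gamma$. We must find $\Delta\ni\psi$ with $\langle\Gamma,\fval{\phi}_c,\Delta\rangle\in P_c$. Every $\theta\in\bigcap\fval{\phi}_c$ satisfies $\slg{TLR}\vdash\phi\to\theta$, because $\theta$ lies in every maximal consistent set containing $\phi$. So by (Mon) it suffices to have $\gamma(\phi,\psi')\in\Gamma$ for every $\psi'\in\Delta$. We therefore build $\Delta$ extending $\{\psi\}$ such that $\gamma(\phi,\rho)\in\Gamma$ for all $\rho\in\Delta$. Enumerate the formulas $\rho_0,\rho_1,\dots$ and define $\theta_0:=\psi$, and $\theta_{n+1}:=\theta_n\wedge\rho_n$ if $\gamma(\phi,\theta_n\wedge\rho_n)\in\Gamma$, otherwise $\theta_{n+1}:=\theta_n\wedge\neg\rho_n$. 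Axiom (A0), applied to $\theta_n\equiv(\theta_n\wedge\rho_n)\vee(\theta_n\wedge\neg\rho_n)$ together with (Mon), keeps the invariant $\gamma(\phi,\theta_n)\in\Gamma$. Each $\theta_n$ is consistent: otherwise (Mon) gives $\gamma(\phi,\theta_n)\to\gamma(\phi,\bot)$, and $\gamma(\phi,\bot)\to\bot$ should be derivable. This needs justification, and I would obtain it from (A4): taking $\psi:=\top$ in (A4) gives $\gamma(\phi,\neg\gamma(\phi,\top))\to\bot$, and $\bot\to\neg\gamma(\phi,\top)$ with (Mon) gives $\gamma(\phi,\bot)\to\gamma(\phi,\neg\gamma(\phi,\top))$. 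Let $\Delta:=\set{\rho}{\theta_n\vdash\rho\text{ for some }n}$; by construction it is maximal consistent. Each $\rho\in\Delta$ is implied by some $\theta_n$, so (Mon) gives $\gamma(\phi,\rho)\in\Gamma$.

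\emph{Left to right.} Suppose $\langle\Gamma,\fval{\phi}_c,\Delta\rangle\in P_c$ with $\psi\in\Delta$. We need $\phi\in\bigcap\fval{\phi}_c$, which is trivial, and then the definition of $P_c$ yields $\gamma(\phi,\psi)\in\Gamma$ directly.

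The main obstacle is the Lindenbaum-style construction in the right-to-left direction. The key points there are that the invariant survives each step, which rests on (A0), and that consistency comes from (A4) with (Mon). A related subtlety is reducing the quantification over $\bigcap\fval{\phi}_c$ to $\phi$ itself via provable implication, which again uses (Mon).
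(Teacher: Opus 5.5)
Your proof is correct and takes essentially the paper's route. It uses the same ingredients: (A0) with (Mon) to split $\gamma(\phi,\cdot)$ over a disjunction, (A4) with (Mon) to rule out $\gamma(\phi,\bot)$, and (Mon) to pass from $\phi$ to any member of $\bigcap\fval{\phi}_c$. The only difference is packaging: the paper proves consistency of $\{\psi\}\cup\set{\neg\psi'}{\gamma(\phi,\psi')\notin\Gamma}$ in one step and then takes an arbitrary maximal extension, whereas you build the maximal set by explicit enumeration while maintaining the invariant $\gamma(\phi,\theta_n)\in\Gamma$.
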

\begin{proof}
By induction on $\chi$. The base case and the step for $\chi=\phi\to\psi$ or $\chi=\Boxd\phi$ are as in the standard canonical Kripke model construction for modal logic (cf.~\cite{CZbook});
we only consider $\chi=\gamma(\phi,\psi)$.
If $\Gamma \in \fval{\gamma(\phi,\psi)}_c$, then $\langle \Gamma, \fval{\phi}_c, \Delta\rangle \in P_c$ for some $\Delta \in \fval{\psi}_c$, whence $\phi \in \bigcap \fval{\phi}_c$ and $\psi \in \Delta$ by the induction hypothesis, hence $\gamma(\phi,\psi)\in \Gamma$ by definition of $P_c$.

Conversely, suppose $\gamma(\phi,\psi) \in \Gamma$. Assume $\Delta_0 := \{\psi\} \cup \set{\neg\psi' }{ \gamma(\phi,\psi') \not\in \Gamma}$ is not $\slg{TLR}$-consistent, witnessed by $\vdash \psi \to \bigvee_{i=1}^k \psi'_i$.
If $k>0$, it follows by (Mon) and (A0) that $\vdash \gamma(\phi,\psi) \to \bigvee_{i=1}^k \gamma(\phi, \psi'_i)$, which is a contradiction. If $k=0$, i.e., $\vdash \neg\psi$, then $\vdash \psi \to \neg\gamma(\phi,\top)$ and $\vdash \gamma(\phi,\neg\gamma(\phi,\top))\to \bot$ by (A4), hence $\vdash \gamma(\phi,\psi)\to\bot$ by (Mon), which is also a contradiction. Therefore, $\Delta_0$ is $\slg{TLR}$-consistent. Fix some $\Delta \in X_c$ extending $\Delta_0$. Now for every $\phi' \in \bigcap \fval{\phi}_c$ and $\psi' \in \Delta$ we have $\vdash \phi \to \phi'$ and $\gamma(\phi,\psi') \in \Gamma$, thus $\gamma(\phi',\psi')\in\Gamma$ by (Mon). Therefore, $\langle \Gamma,\fval{\phi}_c,\Delta \rangle \in P_c$ witnesses $\Gamma \in \fval{\gamma(\phi,\psi)}_c$.
\end{proof}

For the remainder of this section, fix a formula $\phi_0 \in \Lang$. 
To establish the finite model property, we take the ``closure under concatenations'' (i.e., under (F1)) of the ``minimal $\phi_0$-filtration.'' We first verify that it admits a ``filtration lemma'' (Lemma~\ref{L:filtrationLemma}). Then we show that it is a suitable frame, by establishing that the ``minimal $\phi_0$-filtration'' itself satisfies (F2)--(F8) (Lemma~\ref{L:minimalFiltrationProperties}) and that said ``closure under concatenations'' preserves (F2)--(F8) (Lemma~\ref{L:filtrationProperties}). 

\begin{remark}
This line of reasoning resembles e.g. a classical proof of the Kripke finite model property of $\slg{S4}$ by showing that the transitive closure of the minimal filtration is a filtration, that the minimal filtration itself is reflexive, and that transitive closure preserves reflexivity (cf.~\cite{CZbook}).
\end{remark}

\begin{definition} 
Set $\kf{M}^+ := \langle X_c/\mathord{\sim}, R, P^+, \fval{\cdot} \rangle$, where:
\begin{itemize}
	\item $\Gamma \sim \Delta$ iff $\Gamma \in \fval{\phi}_c \Eq \Delta \in\fval{\phi}_c$ for all $\phi \in \Sub\phi_0$. Denote the projection $X_c \to X_c/\mathord{\sim}$ by $\proj$;
	\item $R$ is the transitive closure of $\set{\langle \proj(x), \proj(y)\rangle}{x\mathrel{R_c}y}$;
	\item $P := \set{ \langle \proj(\Gamma), \proj[S], \proj(\Delta)\rangle }{ \langle \Gamma,S,\Delta\rangle \in P_c}$;
	\item $P^+$ is the smallest set extending $P$ and satisfying (F1);
	\item $\fval{p} := \proj[\fval{p}_c]$ for all $p \in \PV$.
\end{itemize}
\end{definition}

\begin{remark}\label{R:concat}
Observe that $\langle x,S,y\rangle \in P^+$ iff there exists a sequence $x_0,\dotsc,x_k$ with $k\geq 1$, $x_0=x$, $x_k=y$, and $x_1,\dotsc,x_{k-1} \in S$ such that $\langle x_i, S,x_{i+1}\rangle \in P$ for each $i=0,\dotsc,k-1$.
\end{remark}

\begin{lemma}\label{L:filtrationLemma}
For every $\chi \in \Sub\phi_0$ we have $\fval{\chi}_c = \proj^{-1}[\fval{\chi}]$.
\end{lemma}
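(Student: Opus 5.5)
We prove Lemma~\ref{L:filtrationLemma} by induction on $\chi \in \Sub\phi_0$, relying on Lemma~\ref{L:canonicalModelTruth} to move freely between $\Gamma \in \fval{\chi}_c$ and $\chi \in \Gamma$. One observation is used repeatedly. For $\chi \in \Sub\phi_0$, the set $\fval{\chi}_c$ is a union of $\sim$-classes, so $\proj^{-1}[\proj[\fval{\chi}_c]] = \fval{\chi}_c$. Hence, once the claim holds for $\chi$, we also have $\proj[\fval{\chi}_c] = \fval{\chi}$.

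The base cases are immediate. For $p$ this is exactly the saturation observation, since $\fval{p} = \proj[\fval{p}_c]$. The cases $\bot$ and $\to$ are trivial.

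For $\chi = \Boxd\phi$ we argue as in the $\slg{K4}$ filtration with the transitive closure, and prove two inclusions.

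First suppose $\Boxd\phi \in \Gamma$ and $\proj(\Gamma) \mathrel{R} w$. Then there is a chain $\Gamma_0 \mathrel{R_c} \Delta_1 \sim \Gamma_1 \mathrel{R_c} \Delta_2 \sim \dotsb \mathrel{R_c} \Delta_n$ with $\Gamma_0 \sim \Gamma$ and $\proj(\Delta_n) = w$. We show by induction along the chain that $\Boxd\phi$ belongs to every $\Gamma_i$ and $\Delta_i$:
\begin{itemize}
\item it passes across $\sim$ because $\Boxd\phi \in \Sub\phi_0$;
\item it passes across $R_c$ because $\Boxd\phi \to \Boxd\Boxd\phi$ is a theorem.
\end{itemize}
Consequently $\phi \in \Delta_n$, and the induction hypothesis gives $w \in \fval{\phi}$.

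Conversely, suppose $\Boxd\phi \notin \Gamma$. The standard canonical argument yields $\Delta$ with $\Gamma \mathrel{R_c} \Delta$ and $\phi \notin \Delta$. Then $\proj(\Gamma) \mathrel{R} \proj(\Delta)$, and $\proj(\Delta) \notin \fval{\phi}$ by the induction hypothesis.

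For $\chi = \gamma(\phi,\psi)$, the forward inclusion is direct. If $\gamma(\phi,\psi) \in \Gamma$, the proof of Lemma~\ref{L:canonicalModelTruth} supplies $\Delta \in \fval{\psi}_c$ with $\langle \Gamma, \fval{\phi}_c, \Delta\rangle \in P_c$. Applying $\proj$ gives $\langle \proj(\Gamma), \fval{\phi}, \proj(\Delta)\rangle \in P \subseteq P^+$, using $\proj[\fval{\phi}_c] = \fval{\phi}$. Moreover $\proj(\Delta) \in \fval{\psi}$ by the induction hypothesis.

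The converse is the main step. Suppose $\proj(\Gamma) \in \fval{\gamma(\phi,\psi)}$. By Remark~\ref{R:concat} there is a sequence $x_0 = \proj(\Gamma), x_1, \dotsc, x_k$ with the following properties:
\begin{itemize}
\item the intermediate points satisfy $x_1, \dotsc, x_{k-1} \in \fval{\phi}$;
\item the endpoint satisfies $x_k \in \fval{\psi}$;
\item each step is witnessed by some $\langle \Gamma_i, S_i, \Delta_{i+1}\rangle \in P_c$ with $\proj(\Gamma_i) = x_i$, $\proj[S_i] = \fval{\phi}$ and $\proj(\Delta_{i+1}) = x_{i+1}$.
\end{itemize}
From $\proj[S_i] = \fval{\phi}$ we get $S_i \subseteq \proj^{-1}[\fval{\phi}] = \fval{\phi}_c$. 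By monotonicity of $P_c$, therefore $\langle \Gamma_i, \fval{\phi}_c, \Delta_{i+1}\rangle \in P_c$.

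We now show by downward induction on $i$ that $\gamma(\phi,\psi) \in \Gamma_i$.
\begin{itemize}
\item For $i = k-1$: the induction hypothesis gives $\psi \in \Delta_k$, and $\phi \in \bigcap\fval{\phi}_c$. The definition of $P_c$ then yields $\gamma(\phi,\psi) \in \Gamma_{k-1}$.
\item For the step from $i+1$ to $i$ with $i+1 \le k-1$: we have $\Delta_{i+1} \sim \Gamma_{i+1}$, both in class $x_{i+1}$. Since $\gamma(\phi,\psi) \in \Sub\phi_0$, it lies in $\Delta_{i+1}$. Also $\phi \in \Delta_{i+1}$, because $x_{i+1} \in \fval{\phi}$ and the induction hypothesis applies. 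So $\phi \wedge \gamma(\phi,\psi) \in \Delta_{i+1}$, and the definition of $P_c$ gives $\gamma(\phi, \phi\wedge\gamma(\phi,\psi)) \in \Gamma_i$. Axiom (A1) then yields $\gamma(\phi,\psi) \in \Gamma_i$.
\end{itemize}
Finally $\Gamma_0 \sim \Gamma$, so $\gamma(\phi,\psi) \in \Gamma$.

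The delicate point is this backward propagation along the concatenation. Transferring $\gamma(\phi,\psi)$ between $\sim$-equivalent sets needs $\gamma(\phi,\psi) \in \Sub\phi_0$, and transferring $\phi$ needs the induction hypothesis; (A1) is exactly what makes closing $P$ under (F1) harmless.
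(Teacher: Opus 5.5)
Your proof is correct and follows the paper's argument: the same induction, with the converse $\gamma$-step done by the same descending induction along the concatenation sequence of Remark~\ref{R:concat}, transferring $\gamma(\phi,\psi)$ across $\sim$ and closing each step with (A1). You also spell out details the paper leaves implicit: the $\Boxd$ step for the transitive-closure filtration, the use of monotonicity of $P_c$ to replace the witness $S_i$ by $\fval{\phi}_c$, and the need for $\phi\in\Delta_{i+1}$ (from $x_{i+1}\in\fval{\phi}$ and the induction hypothesis) before applying the definition of $P_c$.
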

\begin{proof}
By induction on $\chi$. The base case and the step for $\chi=\phi\to\psi$ or $\chi=\Boxd\phi$ are as in the standard filtration construction for $\slg{K4}$ (cf. \cite{CZbook}); we only consider $\chi=\gamma(\phi,\psi)$. 
If $\Gamma \in \fval{\gamma(\phi,\psi)}_c$ is witnessed by $\langle \Gamma, \fval{\phi}_c, \Delta\rangle \in P_c$, then $\langle \proj(\Gamma), \fval{\phi}, \proj(\Delta)\rangle \in P \subseteq P^+$ witnesses $\proj(\Gamma) \in \fval{\gamma(\phi,\psi)}$. 

Conversely, suppose $\proj(\Gamma) \in \fval{\gamma(\phi,\psi)}$ is witnessed by $\langle \proj(\Gamma),\fval{\phi}, y\rangle \in P^+$, and fix a corresponding sequence $x_0,\dotsc,x_k$ as in Remark~\ref{R:concat}. 
For each $i<k$, fix also $\Gamma_i \in \proj^{-1}(x_i)$ and $\Gamma'_{i+1} \in \proj^{-1}(x_{i+1})$ with $\langle \Gamma_i, \fval{\phi}_c, \Gamma_{i+1}'\rangle \in P_c$. Let us show by descending induction on $i=k-1,\dotsc,0$ that $\Gamma_i \in \fval{\gamma(\phi,\psi)}_c$. For $i=k-1$, it suffices to note that $\langle \Gamma_{k-1},\fval{\phi}_c, \Gamma_k'\rangle \in P_c$ and $\Gamma_k' \in g^{-1}(y) \subseteq g^{-1}[\fval{\psi}] = \fval{\psi}_c$. Now consider $i<k-1$ and suppose $\Gamma_{i+1} \in \fval{\gamma(\phi,\psi)}_c$. As $\Gamma_{i+1},\Gamma_{i+1}' \in g^{-1}(x_{i+1})$, we have $\Gamma_{i+1}\sim\Gamma_{i+1}'$, thus $\Gamma_{i+1}' \in \fval{\gamma(\phi,\psi)}_c$, hence $\Gamma_i \in \fval{\gamma(\phi,\phi\wedge\gamma(\phi,\psi))}_c$. 
By Lemma~\ref{L:canonicalModelTruth} we have $\kf{M}_c \models \slg{TLR}$, thus  $ \fval{\gamma(\phi,\phi\wedge\gamma(\phi,\psi))}_c \subseteq \fval{\gamma(\phi,\psi)}_c$ by (A1), hence $\Gamma_i \in \fval{\gamma(\phi,\psi)}_c$. We conclude that $\Gamma_0 \in \fval{\gamma(\phi,\psi)}_c$. Since $\Gamma,\Gamma_0 \in g^{-1}(x_0)$, we have $\Gamma \sim\Gamma_0$, hence $\Gamma \in \fval{\gamma(\phi,\psi)}_c$.
\end{proof}

\begin{lemma}\label{L:minimalFiltrationProperties}
The flanked Kripke frame $\langle X_c/\mathord{\sim}, R, P\rangle$ satisfies (F2)--(F8). 
\end{lemma}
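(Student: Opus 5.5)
The plan is to reduce every property of $P$ to membership of suitable $\gamma$-formulas in maximal consistent sets, and then apply the matching axiom.

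\textbf{Setup.} Since $\Sub\phi_0$ is finite, $X_c/\mathord{\sim}$ is finite. For each class $x$ let $\delta_x$ be the conjunction of the formulas of $\Sub\phi_0$ true in $x$ together with the negations of those false in $x$. For $T\subseteq X_c/\mathord{\sim}$ put $\sigma_T:=\bigvee_{x\in T}\delta_x$, so that $\fval{\sigma_T}_c=\proj^{-1}[T]$ (these formulas need not lie in $\Sub\phi_0$, which does not matter in $\kf{M}_c$). A formula true throughout $\kf{M}_c$ lies in every maximal consistent set, hence is provable. So semantic inclusions in $\kf{M}_c$ may be fed to (Mon) and the axioms.

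\textbf{Key characterization.} I would first show that $\langle x,T,y\rangle\in P$ iff $\gamma(\sigma_T,\delta_y)\in\Gamma$ for some (equivalently, every) $\Gamma\in\proj^{-1}(x)$. For the forward direction, take $\langle\Gamma,S,\Delta\rangle\in P_c$ with $\proj[S]=T$. Monotonicity of $P_c$ gives $\langle\Gamma,\proj^{-1}[T],\Delta\rangle\in P_c$. Since $\sigma_T\in\bigcap\proj^{-1}[T]$ and $\delta_y\in\Delta$, we get $\gamma(\sigma_T,\delta_y)\in\Gamma$. For the converse, apply Lemma~\ref{L:canonicalModelTruth}. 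The step ``for every $\Gamma$'' does not follow from $\sim$ alone, since $\delta_y$ and $\sigma_T$ need not be in $\Sub\phi_0$. I would avoid it by always arguing with a fixed witness $\Gamma$.

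\textbf{The routine properties.} For each, take $\langle x,S,y\rangle\in P$ witnessed by $\gamma(\sigma_S,\delta_y)\in\Gamma$ with $\Gamma\in\proj^{-1}(x)$.
\begin{itemize}
\item (F2) follows from (A2) applied to $\delta_x$.
\item (F3) holds because $\sigma_\varnothing=\bot$ and $\neg\gamma(\bot,\delta_y)$ follows from (A3) and (Mon).
\item (F4): suppose not. Then every $\Delta\in\proj^{-1}(y)$ contains $\neg\gamma(\sigma_S,\delta_x)$, so $\vdash\delta_y\to\neg\gamma(\sigma_S,\delta_x)$. (Mon) gives $\gamma(\sigma_S,\neg\gamma(\sigma_S,\delta_x))\in\Gamma$, and (A4) yields $\neg\delta_x\in\Gamma$, a contradiction.
\item (F5): (A5) gives $\gamma(\sigma_S\wedge\gamma(\sigma_S,\delta_y),\delta_y)\in\Gamma$. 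If $\langle z,S,y\rangle\notin P$, then $\vdash\sigma_S\wedge\gamma(\sigma_S,\delta_y)\to\sigma_{S\setminus\{z\}}$, and (Mon) finishes.
\item (F6): for an $R$-upset $U$, the set $\proj^{-1}[U]$ is an $R_c$-upset because $R$ contains the projected $R_c$. Hence $\vdash\sigma_U\to\Boxc\sigma_U$, and also $\Boxc\sigma_U\to\sigma_U$. Apply (A6) with $\psi=\sigma_U$, $\phi=\sigma_S$, $\chi=\delta_y$. Then (A0) splits the disjunction $(\sigma_S\wedge\neg\sigma_U)\vee\delta_y$ into the classes $z\in(S\setminus U)\cup\{y\}$, giving $\langle x,S\cap U,z\rangle\in P$.
\item (F8): if $\neg(yRy)$, then no member of $\proj^{-1}(y)$ $R_c$-sees a member of $\proj^{-1}(y)$. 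So $\vdash\delta_y\to\Boxd\neg\delta_y$, and (A8) applied to $\gamma(\delta_y,\top)$ gives $\delta_x\wedge\delta_y$ consistent, hence $x=y$.
\end{itemize}

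\textbf{(F7), the main obstacle.} Let $x,y\in S$ and let $U_1,U_2$ be $R$-upsets with $S\subseteq U_1\cup U_2$. Let $W$ be the set of $z\in S$ admitting an $\langle S,U_1,U_2,z,y\rangle$-sequence, and suppose $x\notin W$. Apply (A7) with $\chi=\sigma_S$, $\phi_i=\sigma_{U_i}$ (so $\Boxc\phi_i$ is equivalent to $\sigma_{U_i}$ by the (F6) observation), and $\psi=\sigma_W$.

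The premise $\widehat\gamma(\ldots,\psi)$ at $\Gamma$ should come from $\gamma(\sigma_S,\delta_y)$ together with $\delta_x\wedge\sigma_S$ and $\delta_y\to\sigma_W$ (note $y\in W$). This needs the implication
\[
\widehat\gamma(\sigma_S\wedge\sigma_{U_i},\sigma_W)\to\sigma_W
\]
to hold throughout $\sigma_S$. I would obtain it by unfolding $\widehat\gamma$ via the key characterization into a $P$-step inside $S\cap U_i$ landing in $W$, and then prepending that step to a sequence. The delicate points are that $\widehat\gamma$ demands the intermediate points in $S\cap U_i$ and also the endpoint in $S\cap U_i$, which matches the definition of a sequence step, and that the disjunction over the endpoint classes must be split by (A0). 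Once the premise is secured, (A7) gives $\sigma_W\in\Gamma$, i.e.\ $x\in W$, a contradiction.
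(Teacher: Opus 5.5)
Your proposal is correct and follows essentially the same route as the paper: you fix a single canonical witness $\Gamma$ for each triple, use characteristic formulas $\sigma_T$ (with $\sigma_U\leftrightarrow\Boxc\sigma_U$ for $R$-upsets $U$), and apply (A2)--(A8) the way the paper does, including the prepend-a-step argument for $\widehat\gamma(\chi\wedge\Boxc\phi_i,\psi)\to\psi$ in (F7). The only difference is presentational: you phrase each step as a provable implication fed to (Mon) and (A0), whereas the paper reads the same inclusions semantically in $\kf{M}_c$ via Lemma~\ref{L:canonicalModelTruth}.
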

\begin{proof}
First note that:
\begin{itemize}
	\item All instances of (A2)--(A8) are valid in $\kf{M}_c$. Indeed, $\kf{M}_c \models \slg{TLR}$ by Lemma~\ref{L:canonicalModelTruth}.
	\item For every $T \subseteq X_c/\mathord{\sim}$ there exists a formula $\phi \in \Lang$ with $\proj^{-1}[T]=\fval{\phi}_c$. Moreover, if $T$ is an $R$-upset, then $\fval{\phi}_c = \fval{\Boxc\phi}_c$. Indeed, an appropriate $\phi$ may be built as a Boolean combination of $\Sub\phi_0$. 
If $T$ is an $R$-upset and $\Gamma \in \fval{\phi}_c$, then for every $\Delta \in R_c(\Gamma)$ we have $g(\Gamma) \mathrel{R} g(\Delta)$ and $g(\Gamma) \in T$, thus $g(\Delta) \in T$, whence $\Delta \in \fval{\phi}_c$, hence $\Gamma \in \fval{\Boxc\phi}_c$. 
\end{itemize}

Now we check each of the requisite properties.

(F2): Consider $x \in X_c/\mathord{\sim}$. Fix some $\Gamma \in \proj^{-1}(x)$ and $\phi$ such that $\fval{\phi}_c = \proj^{-1}(x)$. We have $\Gamma \in \fval{\phi}_c \subseteq \fval{\gamma(\phi,\phi)}_c$ by (A2), whence $\langle \Gamma, \fval{\phi}_c,\Gamma' \rangle \in P_c$ for some $\Gamma' \in \fval{\phi}_c$, hence $\langle x, \{x\},x \rangle \in P$.

For (F3)--(F8), consider $\langle x, S, y \rangle \in P$, and fix $\Gamma \in \proj^{-1}(x)$ and $\Delta \in \proj^{-1}(y)$ such that $\langle \Gamma, \proj^{-1}[S], \Delta\rangle \in P_c$. 

(F3): Note that $\Gamma\not\in\fval{\gamma(\bot,\top)}_c$ by (A3), thus $\proj^{-1}[S] \neq \varnothing$, hence $S \neq \varnothing$.

(F4): Fix $\phi$ and $\psi$ such that $\fval{\phi}_c = \proj^{-1}[S]$ and $\fval{\psi}_c=\proj^{-1}(x)$. We have $\Gamma \in \fval{\psi}_c$, thus $\Gamma \not\in\fval{\gamma(\phi,\neg\gamma(\phi,\psi))}_c$ by (A4), whence $\Delta\not\in\fval{\neg \gamma(\phi,\psi)}_c$, hence $\langle \Delta, \fval{\phi}_c, \Gamma'\rangle \in P_c$ for some $\Gamma' \in \fval{\psi}_c$, thus $\langle y,S,x\rangle \in P$.

(F5): Consider $z \in S$ and fix $\phi, \psi$ with $\fval{\phi}_c = \proj^{-1}[S]$ and $\fval{\psi}_c = \proj^{-1}(y)$. If there exists $\Theta \in \proj^{-1}(z) \cap \fval{\gamma(\phi,\psi)}_c$, then $\langle \Theta, \fval{\phi}_c, \Delta'\rangle \in P_c$ for some $\Delta' \in \fval{\psi}_c$, whence $\langle z, S,y \rangle \in P$. Now suppose $\proj^{-1}(z) \cap \fval{\gamma(\phi,\psi)}_c = \varnothing$.  We have $\Gamma \in\fval{\gamma(\phi,\psi)}_c \subseteq \fval{\gamma(\phi\wedge \gamma(\phi,\psi),\psi)}_c$ by (A5), whence $\langle \Gamma, \proj^{-1}[S] \cap \fval{\gamma(\phi,\psi)}_c, \Delta'\rangle \in P_c$ for some $\Delta' \in \proj^{-1}(y)$, thus $\langle \Gamma, \proj^{-1}[S \setminus \{z\}], \Delta'\rangle \in P_c$, hence $\langle x, S \setminus \{z\}, y \rangle \in P$.

(F6): Consider an $R$-upset $U$ with $x \in U$. Fix $\phi$, $\psi$, and $\chi$ such that $\fval{\phi}_c = \proj^{-1}[S]$, $\fval{\Boxc\psi}_c = \fval{\psi}_c = \proj^{-1}[U]$, and $\fval{\chi}_c = \proj^{-1}(y)$. 
Now $\Gamma \in {\fval{\Boxc\psi\wedge \gamma(\phi,\chi)}_c} \subseteq \fval{\gamma(\phi\wedge\Boxc\psi, (\phi\wedge\neg\Boxc\psi)\vee\chi)}_c$ by (A6), whence $\langle \Gamma, \fval{\phi\wedge \Boxc\psi}_c, \Theta\rangle \in P_c$ for some $\Theta \in \fval{(\phi \wedge \neg\Boxc\psi)\vee \chi}_c$, hence $\langle x, {S \cap U}, \proj(\Theta)\rangle \in P$ and $\proj(\Theta) \in (S \setminus U) \cup \{y\}$. 

(F7): Consider $R$-upsets $U_1$ and $U_2$ with $x,y \in S \subseteq U_1 \cup U_2$. Denote by $T$ the set of all $s \in S$ for which there exists an $\langle S,U_1,U_2,s,y\rangle$-sequence. 
Fix $\phi_1$, $\phi_2$, $\psi$, and $\chi$ such that $\fval{\Boxc\phi_i}_c=\fval{\phi_i}_c= \proj^{-1}[U_i]$, $\fval{\psi}_c = \proj^{-1}[T]$, and $\fval{\chi}_c = \proj^{-1}[S]$. 
It suffices to show that $\langle \Gamma, \proj^{-1}[S], \Delta\rangle \in P_c$ witnesses 
$\Gamma \in \fval{\widehat\gamma(\chi\wedge (\Boxc\phi_1\vee\Boxc\phi_2)\wedge(\widehat\gamma(\chi\wedge\Boxc\phi_1,\psi)\vee \widehat\gamma(\chi\wedge\Boxc\phi_2,\psi)\to\psi), \psi)}_c$, for then $\Gamma \in \fval{\psi}_c$ by (A7) and thus ${x \in T}$.
Since $y \in T$, we have $\Delta \in \fval{\psi}_c$. It is immediate that $\Gamma,\Delta \in g^{-1}[S]= \fval{\chi\wedge (\Boxc\phi_1\vee\Boxc\phi_2)}_c$. Now it remains to verify that $\fval{\widehat\gamma(\chi\wedge\Boxc\phi_i,\psi)\to \psi}_c=X_c$ for $i=1,2$.
Suppose $\Theta \in \widehat\gamma(\chi\wedge\Boxc\phi_i,\psi)$. Then $\Theta \in \fval{\chi\wedge \Boxc \phi_i}_c$ and $\langle \Theta, \fval{\chi\wedge\Boxc\phi_i}_c,\Theta'\rangle \in P_c$ for some $\Theta' \in \fval{\chi\wedge\Boxc\phi_i\wedge \psi}_c$, thus $\proj(\Theta) \in S \cap U_i$ and $\proj(\Theta') \in S \cap U_i \cap T$. Now there exists an $\langle S,U_1,U_2,g(\Theta'), y\rangle$-sequence, and appending $g(\Theta)$ to its beginning we obtain an $\langle S,U_1,U_2,g(\Theta),y\rangle$-sequence, whence $g(\Theta) \in T$, hence $\Theta \in \fval{\psi}_c$.

(F8): Suppose $S=\{y\}$. Fix $\phi$ with $\fval{\phi}_c = \proj^{-1}(y)$. 
If $\fval{\phi}_c = \fval{\phi \wedge \Boxd\neg\phi}_c$, then $\langle \Gamma,\proj^{-1}(y),\Delta\rangle \in P_c$ witnesses $\Gamma \in \fval{\gamma(\phi\wedge\Boxd\neg\phi,\top)}_c$, whence $\Gamma \in \fval{\phi}_c$ by (A8), hence $x=\proj(\Gamma)=y$. Otherwise, fix $\Theta \in {\fval{\phi \wedge \neg \Boxd\neg\phi}_c}$. The set $\{\phi\} \cup \set{\psi}{\Boxd\psi \in \Theta}$ is $\slg{TLR}$-consistent, for if $\vdash \bigwedge \psi_i \to \neg \phi$ then $\vdash \bigwedge \Boxd\psi_i \to \Boxd\neg \phi$. Fix some $\Theta' \in \fval{\phi}_c \cap \bigcap_{\Boxd\psi\in\Theta} \fval{\psi}_c$. Now $\Theta \mathrel{R_c} \Theta'$, whence $y\mathrel{R}y$. 
\end{proof}

\begin{lemma}\label{L:filtrationProperties}
The frame $\langle X_c/\mathord{\sim}, R, P^+\rangle$ is suitable.
\end{lemma}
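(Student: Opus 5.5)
Finiteness is immediate, since $X_c/\mathord{\sim}$ has at most $2^{|\Sub\phi_0|}$ elements. Property (F1) holds by the definition of $P^+$. By Lemma~\ref{L:minimalFiltrationProperties}, $P$ satisfies (F2)--(F8), so the task is to show that these properties survive closure under concatenation. Throughout I will use the description of $P^+$ from Remark~\ref{R:concat}: fix $\langle x,S,y\rangle \in P^+$ together with a chain $x=x_0,\dotsc,x_k=y$ with $k\ge 1$, intermediate points $x_1,\dotsc,x_{k-1}\in S$, and each $\langle x_i,S,x_{i+1}\rangle\in P$. Most properties will then follow by induction on $k$, or by applying the $P$-version to each link.

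\textbf{The easy properties.} (F2) follows because $P\subseteq P^+$. (F3) follows from (F3) for the first link. For (F4), reverse the chain: each reversed link lies in $P$ by (F4) for $P$, and the intermediate points are still in $S$. For (F8), suppose $S=\{y\}$. If $k=1$, apply (F8) for $P$. If $k\ge 2$, then $x_1=\dots=x_{k-1}=y$, and the link $\langle y,\{y\},y\rangle$ lies in $P$ whenever $k\ge 3$ (or $k=2$ with $x_1=y$). Applying (F8) for $P$ to the link $\langle x_{k-1},\{y\},y\rangle$ when $x_{k-1}=y$ gives nothing, so instead use the first link $\langle x,\{y\},y\rangle$: (F8) for $P$ yields $x=y$ or $yRy$, which is exactly what is required.

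\textbf{(F5).} Let $z\in S$. If some link satisfies $\langle x_i,S,z'\rangle$ with $x_i=z$ for an intermediate $i\ge 1$, then the tail of the chain gives $\langle z,S,y\rangle\in P^+$. Otherwise, apply (F5) for $P$ to each link $\langle x_i,S,x_{i+1}\rangle$. If some link yields $\langle z,S,x_{i+1}\rangle\in P$, then concatenating with the tail from $x_{i+1}$ gives $\langle z,S,y\rangle\in P^+$; this uses $x_{i+1}\in S$ when $i+1<k$, and is direct when $i+1=k$. If instead every link yields $\langle x_i,S\setminus\{z\},x_{i+1}\rangle\in P$, and $z$ is not among the intermediate points, the chain witnesses $\langle x,S\setminus\{z\},y\rangle\in P^+$.

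\textbf{(F6).} Let $U$ be an $R$-upset with $x\in U$. Walk along the chain. Apply (F6) for $P$ to the current link $\langle x_i,S,x_{i+1}\rangle$ with $x_i\in U$; this produces $w\in (S\setminus U)\cup\{x_{i+1}\}$ with $\langle x_i,S\cap U,w\rangle\in P$. If $w\in S\setminus U$, stop. If $w=x_{i+1}$ and $i+1=k$, stop with $w=y$. If $w=x_{i+1}$ with $i+1<k$, then either $x_{i+1}\notin U$, so $w\in S\setminus U$ and we stop, or $x_{i+1}\in S\cap U$ is a legitimate intermediate point and we continue. Concatenating the collected links gives $\langle x,S\cap U,w\rangle\in P^+$ with $w\in(S\setminus U)\cup\{y\}$.

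\textbf{(F7) is the main obstacle.} Suppose $x,y\in S\subseteq U_1\cup U_2$. The intermediate points already lie in $S$, but (F7) for a single link needs both of its endpoints in $S$, and this holds for every link only because $x,y\in S$. Granting that, (F7) for $P$ gives an $\langle S,U_1,U_2,x_i,x_{i+1}\rangle$-sequence for each link, built from $P$-triples and hence $P^+$-triples. Concatenating these sequences gives the required sequence for $\langle x,S,y\rangle\in P^+$, since $P\subseteq P^+$.

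The subtle point is that a sequence demands triples in $P^+$ relative to $S\cap U_j$, not relative to $S$. So I also have to check that (F7) for $P^+$ can be stated in terms of $P^+$-sequences and that $P$-sequences suffice. This holds because the definition of a sequence only asks for membership in the frame's own relation, and $P\subseteq P^+$ gives that membership. I expect the real care to lie in handling the endpoint cases $i=0$ and $i=k-1$ in (F5) and (F6), and in checking that (F8) is not broken by chains whose intermediate points all equal $y$.
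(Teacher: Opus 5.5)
Your proposal is correct and follows essentially the same route as the paper: you show that (F2)--(F8) for $P$ (Lemma~\ref{L:minimalFiltrationProperties}) survive closure under (F1), using the same case analysis for each property. The only difference is presentational. You unfold $P^+$ into chains of $P$-links via Remark~\ref{R:concat} and apply the $P$-properties link by link, whereas the paper shows that the set of triples satisfying (F3)--(F8) is closed under a single binary concatenation.
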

\begin{proof}
The frame is clearly finite and satisfies (F1). Property (F2) holds for $P^+$ since it holds for $P$ by Lemma~\ref{L:minimalFiltrationProperties} and $P \subseteq P^+$. By Lemma~\ref{L:minimalFiltrationProperties}, $P$ also satisfies (F3)--(F8), hence $P^+$ satisfies (F3)--(F8) for all triples $\langle x,S,y\rangle$ that are in $P$. Now it suffices to check that the set of triples for which $P^+$ satisfies (F3)--(F8) is closed under (F1), i.e., that if $P^+$ satisfies (F3)--(F8) for some $\langle x,S,w\rangle \in P^+$ and $\langle w,S,y\rangle \in P^+$, with $w \in S$, then $P^+$ also satisfies (F3)--(F8) for $\langle x,S,y\rangle$. We do this for each of these properties separately.

(F3): We have $w \in S$, thus $S \neq \varnothing$.

(F4): Since (F4) holds for $\langle x,S,w\rangle$ and $\langle w,S,y\rangle$, we have $\langle y,S,w\rangle, \langle w,S,x\rangle \in P^+$, hence $\langle y,S,x\rangle \in P^+$ by (F1).

(F5): Consider $z \in S$. 
If $z = w$ or $\langle z,S,y\rangle \in P^+$, there is nothing to prove. If $\langle z,S,w\rangle \in P^+$, then $\langle z,S,y\rangle \in P^+$ by (F1) since $\langle w,S,y\rangle \in P^+$. In all other cases, $\langle x,S \setminus \{z\},w\rangle \in P^+$ by (F5) for $\langle x,S,w\rangle$, $\langle w,S\setminus \{z\},y\rangle \in P^+$ by (F5) for $\langle w,S,y\rangle$, and $w \in S \setminus \{z\}$, hence $\langle x, S \setminus \{z\},y\rangle \in P^+$ by (F1).

(F6): Consider an $R$-upset $U$ with $x \in U$. By (F6) applied to $\langle x,S,w\rangle$ we have $\langle x,S\cap U,z\rangle \in P^+$ for some $z \in (S \setminus U) \cup \{w\}$. If $z \in S \setminus U$, there is nothing to prove. Otherwise, $z=w \in U$. Now, (F6) applied to $\langle w,S,y\rangle$ yields $\langle w,S\cap U,z'\rangle \in P^+$ for some $z' \in (S\setminus U) \cup \{y\}$. Since $\langle x,S\cap U,w\rangle \in P^+$ and $w \in U$, we obtain $\langle x,S\cap U,z'\rangle \in P$ by (F1).

(F7): Consider $R$-upsets $U_1$ and $U_2$ with $x,y \in S \subseteq U_1\cup U_2$. 
By (F7) for $\langle x,S,w\rangle$ and $\langle w,S,y\rangle$ there exist an $\langle S,U_1,U_2,x,w\rangle$-sequence and an $\langle S,U_1,U_2,w,y\rangle$-sequence; concatenating them we obtain an $\langle S,U_1,U_2,x,y\rangle$-sequence.

(F8): If $S \neq \{y\}$, there is nothing to prove. If $S=\{y\}$, then $w \in S$ implies $y=w$, hence (F8) for $\langle x,S,w\rangle$ implies (F8) for $\langle x,S,y\rangle$.
\end{proof}

Summarizing this section, we have: 

\begin{proposition}\label{P:flankedFMP}
If $\slg{TLR}\nvdash \phi_0$, then $\phi_0$ is refuted in a suitable frame of size at most $2^{|\Sub\phi_0|}$. 
\end{proposition}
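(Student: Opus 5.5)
The plan is to assemble the results of this section. Suppose $\slg{TLR}\nvdash\phi_0$. Then $\{\neg\phi_0\}$ is $\slg{TLR}$-consistent, and by the standard Lindenbaum argument it extends to some $\Gamma_0\in X_c$ with $\phi_0\notin\Gamma_0$. By Lemma~\ref{L:canonicalModelTruth}, $\Gamma_0\notin\fval{\phi_0}_c$.

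Next I would pass to the filtration $\kf{M}^+$ built from this $\phi_0$. Since $\phi_0\in\Sub\phi_0$, Lemma~\ref{L:filtrationLemma} gives $\fval{\phi_0}_c=\proj^{-1}[\fval{\phi_0}]$. Hence $\proj(\Gamma_0)\notin\fval{\phi_0}$, so $\phi_0$ is refuted at $\proj(\Gamma_0)$ in the model $\kf{M}^+$. By Lemma~\ref{L:filtrationProperties}, the underlying frame $\langle X_c/\mathord{\sim},R,P^+\rangle$ is suitable, so $\phi_0$ is refuted in a suitable frame.

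For the size bound, note that the $\sim$-class of $\Gamma$ is determined by the set $\set{\phi\in\Sub\phi_0}{\Gamma\in\fval{\phi}_c}$. The map sending a class to this subset of $\Sub\phi_0$ is injective, so $|X_c/\mathord{\sim}|\le 2^{|\Sub\phi_0|}$.

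No step here is a real obstacle, since the substantive work is in Lemmas~\ref{L:filtrationLemma}--\ref{L:filtrationProperties}. The only point to check is that the valuation $\fval{\cdot}$ of $\kf{M}^+$, extended to all of $\Lang$ through the flanked semantics on $P^+$, is exactly the valuation referred to in Lemma~\ref{L:filtrationLemma}. That holds by the definitions.
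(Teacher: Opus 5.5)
Your proposal is correct and follows the paper's own proof: suitability of the frame comes from Lemma~\ref{L:filtrationProperties}, the bound $|X_c/\mathord{\sim}|\le 2^{|\Sub\phi_0|}$ is immediate, and the refutation passes from $\kf{M}_c$ to $\kf{M}^+$ via Lemmas~\ref{L:canonicalModelTruth} and~\ref{L:filtrationLemma}. The only difference is that you spell out the Lindenbaum step and the injection behind the size bound, both of which the paper leaves implicit.
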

\begin{proof}
The flanked Kripke frame $\langle X_c/\mathord{\sim}, R, P^+\rangle$ is suitable by Lemma~\ref{L:filtrationProperties}. Clearly, $|X_c/\mathord{\sim}| \leq 2^{|\Sub\phi_0|}$. 
If $\slg{TLR}\nvdash \phi_0$, then $\kf{M}_c \not\models \phi_0$ by Lemma~\ref{L:canonicalModelTruth}, whence $\kf{M}^+ \not\models \phi_0$ by Lemma~\ref{L:filtrationLemma}.
\end{proof}

\section{Tree Unraveling}\label{S:unraveling} 

In this section, we show that every formula refuted in a suitable frame is also refuted in a metric space. First, we introduce \emph{path-morphisms} that preserve the validity of $\Lang$-formulas (cf.~\cite{BBCFG24}). Then, given a suitable frame $X$, it will suffice to produce a path-morphism $f: T \to X$ from some metric space $T$.

\begin{definition}
For a topological space $\langle T,\tau\rangle$ and a flanked Kripke frame $\langle X,R,P\rangle$, a map $f: T \to X$ is a \emph{path-morphism} if:
\begin{itemize}[leftmargin=45pt]
	\item[($\Diamond$-forth)] every $a \in T$ has a neighborhood $U \in \tau$ with $f[U \setminus \{a\}] \subseteq R(f(a))$;
	\item[($\Diamond$-back)] if $f(a) R y$ and $a \in U \in \tau$, then there exists $b \in U \setminus \{a\}$ with $f(b)=y$; 
	\item[($\gamma$-forth)] if $\delta$ is a path in $\langle T,\tau\rangle$, then $\langle f(\delta(0)), f[\delta[(0,1)]], f(\delta(1))\rangle \in P$; and
	\item[($\gamma$-back)] if $\langle f(a),S,y\rangle \in P$, then there exists a path $\delta$ in $\langle T,\tau\rangle$ with $\delta(0)=a$, $f[\delta[(0,1)]] \subseteq S$, and $f(\delta(1)) = y$.
\end{itemize}
\end{definition}

\begin{lemma}\label{L:pathMorphism}
Let $f : \langle T,\tau\rangle \to \langle X,R,P\rangle$ be a path-morphism and $\fval{\cdot} : X \to \PV$. Put $\fval{p}_T := f^{-1}[\fval{p}]$ for all $p \in \PV$. Then $a \in \fval{\chi}_T \Eq f(a) \in \fval{\chi}$ for all $\chi \in \Lang$ and $a \in T$.
\end{lemma}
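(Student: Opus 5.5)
We argue by structural induction on $\chi$, showing simultaneously for all $a \in T$ that $a \in \fval{\chi}_T$ iff $f(a) \in \fval{\chi}$. The base case holds by the definition $\fval{p}_T = f^{-1}[\fval{p}]$, the case of $\bot$ is trivial, and the step for $\phi\to\psi$ follows from the induction hypothesis because preimages commute with Boolean operations. So, assuming the induction hypothesis $\fval{\phi}_T = f^{-1}[\fval{\phi}]$ and $\fval{\psi}_T = f^{-1}[\fval{\psi}]$, only the two modal cases need argument.

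\emph{Case $\Boxd\phi$.} This is the standard argument for d-morphisms in the Kripke/topological d-semantics. Suppose $f(a) \in \fval{\Boxd\phi}$, i.e.\ $R(f(a)) \subseteq \fval{\phi}$. By ($\Diamond$-forth) there is an open $U \ni a$ with $f[U\setminus\{a\}] \subseteq R(f(a)) \subseteq \fval{\phi}$. Hence $U \setminus \{a\} \subseteq f^{-1}[\fval{\phi}] = \fval{\phi}_T$, and so $a \in \fval{\Boxd\phi}_T$. Conversely, suppose $a \in \fval{\Boxd\phi}_T$ is witnessed by an open $U$, and suppose $f(a) R y$. By ($\Diamond$-back) there is $b \in U\setminus\{a\}$ with $f(b)=y$. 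Since $b \in \fval{\phi}_T$, the induction hypothesis gives $y \in \fval{\phi}$. Hence $R(f(a)) \subseteq \fval{\phi}$.

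\emph{Case $\gamma(\phi,\psi)$.} Suppose $a \in \fval{\gamma(\phi,\psi)}_T$ is witnessed by a path $\delta$. By ($\gamma$-forth), $\langle f(a), f[\delta[(0,1)]], f(\delta(1))\rangle \in P$. The induction hypothesis gives $f[\delta[(0,1)]] \subseteq \fval{\phi}$ and $f(\delta(1)) \in \fval{\psi}$. Monotonicity of $P$ then yields $\langle f(a), \fval{\phi}, f(\delta(1))\rangle \in P$, which witnesses $f(a) \in \fval{\gamma(\phi,\psi)}$. Conversely, suppose $\langle f(a), \fval{\phi}, y\rangle \in P$ with $y \in \fval{\psi}$. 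By ($\gamma$-back) there is a path $\delta$ with $\delta(0)=a$, $f[\delta[(0,1)]] \subseteq \fval{\phi}$ and $f(\delta(1))=y$. The induction hypothesis gives $\delta[(0,1)] \subseteq \fval{\phi}_T$ and $\delta(1) \in \fval{\psi}_T$, so $a \in \fval{\gamma(\phi,\psi)}_T$.

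There is no real obstacle here. The only point needing care is the use of monotonicity of $P$ in the forward $\gamma$ direction: ($\gamma$-forth) only yields the exact image set $f[\delta[(0,1)]]$, which must be enlarged to $\fval{\phi}$ before the semantic clause for $\gamma$ applies.
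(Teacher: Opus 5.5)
Your proposal is correct and follows essentially the same structural induction as the paper, using ($\Diamond$-forth)/($\Diamond$-back) for $\Boxd$ and ($\gamma$-forth)/($\gamma$-back) for $\gamma$. Your explicit appeal to monotonicity of $P$ in the forward $\gamma$ direction is exactly the step the paper leaves implicit.
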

\begin{proof}
By induction on $\chi$. The step for $\chi=\phi\to\psi$ is trivial. 

Consider $\chi=\Boxd\phi$. If $f(a) \in \fval{\Boxd\phi}$, then $R(f(a)) \subseteq \fval{\phi}$, whence $f^{-1}[R(f(a))] \subseteq\fval{\phi}_T$ by the induction hypothesis for $\phi$, hence the set $U$ given by ($\Diamond$-forth) witnesses $a \in \fval{\Box\phi}_T$. 
Conversely, suppose $a\in \fval{\Boxd\phi}_T$, i.e., there exists $U \in \tau$ with $a \in U$ and $U \setminus \{a\} \subseteq \fval{\phi}_T$. By ($\Diamond$-back), for every $y \in R(f(a))$ there exists $b \in U \setminus \{a\} \subseteq \fval{\phi}_T$ with $f(b)=y$, whence $y \in \fval{\phi}$ by the induction hypothesis for $\phi$. Therefore, $R(f(a)) \subseteq \fval{\phi}$, i.e., $f(a) \in \fval{\Boxd\phi}$. 

Consider $\chi=\gamma(\phi,\psi)$. If $a \in \fval{\gamma(\phi,\psi)}_T$ is witnessed by a path $\delta$, then $\langle f(a), f[\delta[(0,1)]], f(\delta(1))\rangle \in P$ by ($\gamma$-forth), hence $f(a) \in \fval{\gamma(\phi,\psi)}$. Conversely, if $f(a) \in \fval{\gamma(\phi,\psi)}$ is witnessed by $\langle f(a),\fval{\phi},y\rangle \in P$, then by \mbox{($\gamma$-back)} there exists a path $\delta$ in $\langle T,\tau\rangle$ witnessing $a \in \fval{\gamma(\phi,\psi)}_T$.
\end{proof}

\begin{example}\label{E:cantor}
Take $\langle X,R,P\rangle$ from Example~\ref{E:positive}. Consider the Cantor set $\mathcal{C} \subseteq [0,1]$ and present $[0,1] \setminus \mathcal{C}$ as a union of pairwise-disjoint intervals $\bigcup_{i \in \mathbb{Q}} (\alpha_i,\beta_i)$. Now consider $\pi:[0,1] \to X$ given by $\pi[[0,1] \setminus \bigcup_{i \in \mathbb{Q}} [\alpha_i,\beta_i]] = \{c\}$ and $\pi \restriction [\alpha_i,\beta_i] = \pi'$ for every $i$, where $\pi'(0)=\pi'(1) := b$ and $\pi'[(0,1)] := \{a\}$. A direct inspection shows that $\pi$ is a path-morphism with respect to the usual topology on $[0,1]$. By Lemma~\ref{L:pathMorphism}, it follows that all $\Lang$-formulas refuted in $\langle X,R,P\rangle$ are also refuted in $[0,1]$.
\end{example}

In general, given a suitable frame $X$, we will construct a metric tree-like structure $T$ whose ``edges'' are isomorphic copies of $[0,1]$, and a path-morphism $f: T \to X$ will be defined in terms of its restrictions to individual ``edges.'' For $f$ to satisfy ($\Diamond$-forth) and ($\gamma$-forth), it is necessary that said restrictions themselves satisfy ($\Diamond$-forth) and ($\gamma$-forth), i.e., that they are maps of the following form:

\begin{definition}
A \emph{good path} in a suitable frame $\langle X,R,P\rangle$ is a map $[0,1]\to X$ satisfying ($\Diamond$-forth) and:
\begin{itemize}
	\item[(G)] for all $0\leq t<u\leq 1$ we have $\langle \pi(t), \pi[(t,u)], \pi(u) \rangle \in P$.
\end{itemize}
\end{definition}

\begin{remark}
It easily follows from (F1) and (F4) that (G) is equivalent to ($\gamma$-forth). A map $[0,1]\to X$ satisfies ($\Diamd$-forth) iff preimages of $R$-upsets are open and preimages of irreflexive singletons are discrete. In particular, $\iota_{x,x}$ is a good path iff $x \mathrel{R} x$.
\end{remark}

\begin{definition}
A map $\pi: [0,1] \to X$ \emph{matches} a triple $\langle x, S, y\rangle \in X \times \cl{P}(X) \times X$ if $\pi(0)=x$, $\pi[(0,1)] \subseteq S$, and $\pi(1)=y$.
\end{definition}

We claim that each ``non-constant'' triple in $P$ is matched either by a good path or by a constant (Lemma~\ref{L:matchingGoodPaths}). This will provide us a sufficient supply of good paths to define a path-morphism $f: T \to X$ (Proposition~\ref{P:networkTree}).

\begin{example}
Each triple in $P$ from Examples~\ref{E:positive} and \ref{E:cantor} is matched by one of the following good paths: $\iota_{a,a}$, $\iota_{b,b}$, $\iota_{c,c}$, $\iota_{b,a}$, $\iota_{b,a}\restriction [1,0]$, $\pi'$, the map given by $\delta(0):=c$ and $\delta(t):=\pi(t)$ for $t>0$, the map given by $\delta'(1):=c$ and $\delta'(t):=\pi(t)$ for $t<1$, the concatenation of $\iota_{a,b}$ with $\delta'$, or the concatenation of $\delta$ with $\iota_{a,b}\restriction [1,0]$. 
\end{example}

We will construct the requisite good paths from ``simpler'' good paths using several specific operations (among them are concatenation, reversion, the way $\pi$ is constructed from $\pi'$ in Example~\ref{E:cantor}, etc.). 
The following lemma, informally speaking, states that all elements of $P$ can be ``constructed'' from ``simpler'' elements of $P$ using the corresponding set of operations on triples.

\begin{lemma}\label{L:triplesOrdering}
Let $\langle X, R, P \rangle$ be a suitable frame. Then there exists a strict partial order $\vartriangleleft$ on $P$ such that for every $\langle x, S,y\rangle \in P$ one of the following holds:
\begin{itemize}
	\item[(C1)] $x=y \in S$; 
	\item[(C2)] $\langle y,S,x\rangle \vartriangleleft \langle x,S,y\rangle$;
	\item[(C3)] there is a sequence $x_0,\dotsc,x_k$ with $k \geq 1$, $x=x_0$, $x_k=y$, and $x_1,\dotsc,x_{k-1} \in S$, such that for each $i<k$ there exists $S_i' \subseteq S$ with $P \owns \langle x_i, S_i',x_{i+1}\rangle \vartriangleleft \langle x,S,y\rangle$;
	\item[(C4)] we have:
	\begin{itemize}
		\item[(C4a)] $x \neq y \in S \subseteq R(x)$;
		\item[(C4b)] $P \owns \langle z,S,w\rangle \vartriangleleft \langle x,S,y\rangle$ for all $z,w \in S$; and 
		\item[(C4c)] $\langle x,S,z\rangle \in P$ for all $z \in S$;
	\end{itemize}
	\item[(C5)] there exists $U \subseteq X$ such that:
	\begin{itemize}
	\item[(C5a)] $x,y\in S \setminus U$;	
	\item[(C5b)] $(S \setminus U) \times S \subseteq R$;
	\item[(C5c)] for every $z \in S\cap U$ there exists $w_z \in S \setminus U$ with $P \owns \langle z, S \cap U,w_z\rangle \vartriangleleft \langle x,S,y\rangle$; and
	\item[(C5d)] $\langle z,S,w\rangle \in P$ for all $z,w \in S$.
	\end{itemize}
\end{itemize}
\end{lemma}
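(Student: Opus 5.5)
The plan is to define $\vartriangleleft$ by a lexicographic measure on triples and then prove the dichotomy by a case analysis driven by (F4)--(F8). Set $\langle x',S',y'\rangle \vartriangleleft \langle x,S,y\rangle$ iff either $S' \subsetneq S$, or $S'=S$ and $c(x',S',y') < c(x,S,y)$. Here $c$ is an ``endpoint class'': $c=0$ if $x,y\in S$, $c=1$ if $x\in S\not\ni y$, $c=2$ if $x\notin S\ni y$, and $c=3$ if $x,y\notin S$. Since $X$ is finite, this is a strict partial order (strict inclusion combined with a strict order on a finite set). The case analysis then proceeds by reduction. If $c=2$, (F4) gives $\langle y,S,x\rangle\in P$ of class $1$, so (C2) holds. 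Every triple of class $0$ lies below every triple of classes $1$--$3$ with the same $S$, which is what (C4b) and (C5d) will exploit. Whenever an argument produces a chain $x=x_0,\dots,x_k=y$ with $x_1,\dots,x_{k-1}\in S$ and links over proper subsets of $S$, (C3) holds.

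Next, classify by the Alexandroff structure on $S$. Let $K:=\set{z\in S}{S\subseteq R(z)}$ be the ``top'' part of $S$, and let $U$ be the $R$-upset generated by $S\setminus K$ (or its complement within $S$, as convenient). First I try (F7) with $R$-upsets $U_1,U_2$ covering $S$ such that each $S\cap U_j \subsetneq S$. When $x,y\in S$, this produces an $\langle S,U_1,U_2,x,y\rangle$-sequence whose links are over $S\cap U_{j_i}\subsetneq S$, giving (C3). So I may assume $S$ admits no such cover. The intended consequence is that the $R$-upset generated by any single point of $S\setminus K$ misses part of $S$, and that $K\neq\varnothing$ is the unique ``top cluster'' of $S$. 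For endpoints outside $K$ (or outside $S$), I apply (F6) to the upset $U=\{x\}\cup R(x)$. If $S\not\subseteq U$, this yields $\langle x,S\cap U,z\rangle\in P$ with $z\in S\setminus U$ or $z=y$, a strictly smaller triple. I then continue from $z$, using (F5) to either shrink $S$ (removing a point, hence a $\vartriangleleft$-smaller triple) or obtain $\langle z,S,y\rangle$ of lower class. Iterating produces a (C3) chain.

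The residual situations are the two ``irreducible'' cases. If $x\notin S$ and $S\subseteq R(x)$ with $y\in S$ after the class reductions, I aim for (C4). Property (C4a) is the case assumption, with $x\neq y$ since $y\in S\not\ni x$. Property (C4b) follows from the class order once I show $\langle z,S,w\rangle\in P$ for $z,w\in S$. Both this and (C4c) should come from (F5) and (F1): if removing any point $z$ from $S$ kept $\langle x,S\setminus\{z\},y\rangle\in P$, we would be in (C3). Hence (F5) forces $\langle z,S,y\rangle\in P$ for all $z\in S$, and (F4) together with (F1) (concatenating through $y\in S$) gives $\langle z,S,w\rangle$ and $\langle x,S,z\rangle$. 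If $x,y\in S$ and both lie in $K$, I aim for (C5) with $U:=S\setminus K$ (or the upset it generates). Then (C5a) and (C5b) hold by the definition of $K$, and (C5d) follows by the same (F5)/(F4)/(F1) argument. For (C5c), I apply (F6) to each $z\in S\cap U$ with the upset generated by $z$, obtaining a triple over $S\cap U\subsetneq S$ that ends outside $U$, i.e.\ in $K$. Singleton cases $S=\{y\}$ are handled by (F8), which gives either (C1) or $yRy$, the latter placing us in (C4).

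The main obstacle I expect is the bookkeeping in the middle step. I must ensure that the non-decomposability assumption from (F7) really yields a $K$ that makes (C5b) hold with $x,y\in S\setminus U$. I must also ensure that the (F6)-chains from an endpoint outside $K$ terminate at $y$ rather than cycling, so that finiteness of $X$ and strictness of $\vartriangleleft$ give a genuine (C3) sequence. If the simple class measure turns out too coarse, I would refine it by adding, as a second key after $S$, the number of endpoints lying outside $K$.
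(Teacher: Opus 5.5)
\textbf{Verdict.} Your proposal has a genuine gap: the order $\vartriangleleft$ you define does not satisfy the lemma. Your skeleton is the right one: use the failure of (C3) with (F5), (F4), (F1) to get all $\langle z,S,w\rangle$; use (F6) with $\refcl{R}(x)$ to force $S\subseteq R(x)$; use (F7) to isolate a top cluster. The order fails for two concrete reasons.

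\textbf{First gap: the endpoint classes are oriented the wrong way.} Conditions (C1), (C4) and (C5) all require $y\in S$. So a triple with $x\in S\not\ni y$ can only get (C2) or (C3). Under your $c$ such a triple lies below its reversal, so only (C3) remains. In (C3) the last link ends at $y\notin S$ and starts in $S$, so it must lie over a proper subset of $S$, and such a link need not exist.

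Concretely, let $X=\{a,b\}$ and $R=\{\langle a,b\rangle,\langle b,b\rangle\}$, and let $P$ be all $\langle u,S,v\rangle$ with $u=v\in S$ or $b\in S$. This frame is suitable. Yet $\langle b,\{b\},a\rangle\in P$ satisfies none of (C1)--(C5), since the only triple below it is $\langle b,\{b\},b\rangle$.

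Your plan is also internally inconsistent here. The case $x\notin S\ni y$ that you later send to (C4) is one you had already dispatched by (C2).

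Since (C4) only covers paths that accumulate at their start point $x\notin S$, the order must rank $y\in S$ below $y\notin S$, and then $x\in S$ below $x\notin S$; these are the paper's (L2) and (L3). Then $x\in S\not\ni y$ gets (C2), and $x\notin S\ni y$ gets (C4) exactly by your (F6)/(F5) argument.

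\textbf{Second gap: no comparison among triples with both endpoints in $S$.} Among triples with both endpoints in $S$, your order compares only by strict inclusion of $S$, so (C5a) cannot be forced. In Example~\ref{E:positive}, the triple $\langle a,X,c\rangle\in P$ satisfies nothing:
\begin{itemize}
\item (C5a)--(C5b) would need $X\subseteq R(a)=\{a\}$;
\item its reversal is not below it, so (C2) fails;
\item $c$ is unreachable from $a$ by $P$-triples over proper subsets of $X$, so (C3) fails.
\end{itemize}

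The paper adds (L4) and (L5) for $x,y,x',y'\in S$. These rank a triple lower when its end $y$, and then its start $x$, has a strictly larger $\refcl{R}$-set. Take $U$ a maximal $R$-upset strictly inside $\refcl{R}[S]$. A start $x\in U$ gives, via (F6), some $z\in S\setminus U$ with $\langle z,S,y\rangle\vartriangleleft\langle x,S,y\rangle$ by (L5), hence (C3). A start $x\notin U$ with $y\in U$ gives (C2) by (L4).

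\textbf{Your fallback does not substitute for this.} Counting endpoints outside $K$ has three problems:
\begin{itemize}
\item It does not touch the orientation problem above.
\item Placed as a second key before the class key, it can break (C4b). When $y\in K$, any $\langle z,S,w\rangle$ with $z,w\in S\setminus K$ outranks $\langle x,S,y\rangle$. This happens in a suitable three-point frame with $R(x)=R(y)=\{y,z\}$, $R(z)=\{z\}$ and the irreducible triple $\langle x,\{y,z\},y\rangle$.
\item In the subcase $x\in K$, $y\notin K$, reversal does not lower the count. You would need an extra argument, e.g.\ (F6) applied to $\langle y,S,x\rangle$.
\end{itemize}
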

\begin{proof}
For $z \in X$, put $\refcl{R}(z) := \{z\} \cup R(z)$; note that $\refcl{R}(z)$ is an $R$-upset. 
Put $\langle x,S,y\rangle \vartriangleleft \langle x',S',y'\rangle$ iff one of the following holds:
\begin{itemize}
	\item[(L1)] $S\subsetneq S'$;
	\item[(L2)] $S=S'$ and $y \in S \not \owns y'$;
	\item[(L3)] $S=S'$, $x \in S \not\owns x'$, and either $y,y' \in S$ or $y,y'\not\in S$;
	\item[(L4)] $x,y,x',y'\in S=S'$ and $\refcl{R}(y) \supsetneq \refcl{R}(y')$; 
	\item[(L5)] $x,y,x',y' \in S=S'$, $\refcl{R}(y)=\refcl{R}(y')$, and $\refcl{R}(x) \supsetneq \refcl{R}(x')$.
\end{itemize}
Observe that $\vartriangleleft$ is a strict partial order. Now consider $\langle x,S,y\rangle \in P$ and suppose that (C1), (C2), and (C3) do not hold. Then:
\begin{itemize}
	\item[(i)] For all $S' \subsetneq S$ we have $\langle x, S' , y\rangle \not\in P$. Indeed, otherwise (C3) is true with $k=1$.
	\item[(ii)] For every $z \in S$ we have $\langle x, S, z\rangle \in P$ and $\langle z,S,y\rangle \in P$.
Indeed, $\langle x,S\setminus \{z\}, y\rangle \not \in P$ by (i); it remains to apply (F5) together with (F4).  
	\item[(iii)] If $y \in S$, then $\langle z,S,w\rangle \in P$ for all $z,w \in S$. Indeed, $\langle z,S,y\rangle, \langle y, S, w\rangle \in P$ by (ii) and (F4), hence $\langle z,S,w\rangle \in P$ by (F1). 
	\item[(iv)] If an $R$-upset $U$ satisfies $S \not\subseteq U$ and $x \in U$, then there exists $z \in S\setminus U$ with $\langle z,S,y\rangle \ntriangleleft \langle x,S,y\rangle$. 
Indeed, by (F6) we have $\langle x, S \cap U, z \rangle \in P$ for some $z \in (S \setminus U) \cup \{y\}$. 
By (i), $z\neq y$. Now we have $z \in S$, $P \owns \langle x,S \cap U, z\rangle \vartriangleleft \langle x,S,y\rangle$ by (L1), and $\langle z,S,y\rangle \in P$ by (ii). Therefore, $\langle z,S,y\rangle \ntriangleleft \langle x,S,y\rangle$, for otherwise (C3) holds with $k=2$ and $x_1=z$.
\end{itemize}

Now, to show that either (C4) or (C5) holds for $\langle x,S,y\rangle$, consider two cases.

\NewCase\textbf{Case 1: $x \not \in S$.} Let us establish (C4). 
By (F3) we have $S \neq \varnothing$; fix some $z' \in S$. If $y \not \in S$, we have $P \owns \langle x,S,z'\rangle \vartriangleleft \langle x,S,y\rangle$ by (ii) and (L2), and $P \owns \langle z',S,y\rangle \vartriangleleft \langle x,S,y\rangle$ by (ii) and (L3), hence (C3) holds with $k=2$. Therefore, $y \in S$.

By (L3) we have $\langle z,S,y\rangle \vartriangleleft \langle x,S,y\rangle$ for all $z \in S$, thus by (iv) applied to $U:=\refcl{R}(x)$ we obtain $S\subseteq \refcl{R}(x)$, hence $S \subseteq R(x)$, i.e., (C4a) is true. Condition (C4b) follows from (iii) and (L3), and (C4c) holds by~(ii).

\NewCase\textbf{Case 2: $x\in S$.} 
If $y \not \in S$, then $\langle y,S,x\rangle \vartriangleleft\langle x,S,y\rangle$ by (L2), whence (C2) holds. Therefore, $y \in S$.

Fix an $R$-upset $U$ which is maximal with respect to $U \subsetneq \refcl{R}[S]$. (Note that $U$ may be empty.) Since $\refcl{R}[S] \not\subseteq U$, we have $S \not\subseteq U$. Let us show that (C5) is true for $U$.

First we claim that every $z \in S \setminus U$ satisfies $\refcl{R}(z)=\refcl{R}[S]$. Indeed, by maximality of $U$ we have $U \cup \refcl{R}(z) = \refcl{R}[S]$. 
Using (F7), fix an $\langle S, U,\refcl{R}(z),x,y\rangle$-sequence $x_0,\dotsc,x_k$. 
Since (C1) does not hold, we have $x \neq y$ and thus $k>0$.
Since (C3) does not hold, there exists $i<k$ with $\langle x_i, S \cap U, x_{i+1}\rangle \ntriangleleft \langle x,S,y\rangle$ or $\langle x_i, S \cap \refcl{R}(z), x_{i+1}\rangle \ntriangleleft \langle x,S,y\rangle$. 
As (L1) does not apply, it follows that $S \cap U=S$ or $S \cap \refcl{R}(z) =S$. Since $S \not\subseteq U$, we conclude that $S \subseteq \refcl{R}(z)$ and thus $\refcl{R}(z) = \refcl{R}[S]$.

To verify (C5a), first assume $x \in U$. Since $S\not\subseteq U$, by (iv) there exists $z \in S \setminus U$ with $\langle z,S,y\rangle \ntriangleleft \langle x,S,y\rangle$. But $x,y,z \in S$ and $\refcl{R}(x) \subseteq U \subsetneq \refcl{R}[S] = \refcl{R}(z)$, thus $\langle z,S,y\rangle \vartriangleleft \langle x,S,y\rangle$ by (L5), which is a contradiction. Now assume $x \in S \setminus U$ and $y \in U$. Then $\refcl{R}(y) \subsetneq \refcl{R}[S] =\refcl{R}(x)$, thus $\langle y,S,x\rangle \vartriangleleft \langle x,S,y\rangle$ by (L4), hence (C2) holds. Therefore, $x,y \in S \setminus U$.

To verify (C5b), recall that $S \subseteq \refcl{R}[S] = \refcl{R}(z)$ for all $z \in S \setminus U$. It remains to note that $\refcl{R}$ is total on $S \setminus U$ and $x,y \in S \setminus U$ are distinct, thus all $z \in S \setminus U$ are $R$-reflexive.

To check (C5c), consider $z \in S \cap U$. By (ii) we have $\langle z, S,y\rangle \in P$, thus by (F6) there exists $w_z \in S \setminus U \cup \{y\} = S \setminus U$ satisfying $\langle z, S \cap U, w_z \rangle \in P$.

Condition (C5d) follows from (iii).
\end{proof}

\begin{lemma}\label{L:matchingGoodPaths}
Let $\langle X, R, P \rangle$ be a suitable frame. Then, for every triple $\langle x,S,y\rangle \in P$, either $x=y \in S$ or there exists a good path matching $\langle x,S,y\rangle$.
\end{lemma}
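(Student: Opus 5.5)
We argue by well-founded induction on $\vartriangleleft$ from Lemma~\ref{L:triplesOrdering}; this is legitimate since $P$ is finite, as $X$ is. Fix $\langle x,S,y\rangle \in P$ and assume the claim for all $\vartriangleleft$-smaller triples. We then go through the cases (C1)--(C5).

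(C1) needs nothing. In case (C2), reverse the path for $\langle y,S,x\rangle$, or keep the constant case. Reversal preserves ($\Diamond$-forth) trivially, and it preserves (G) by (F4).

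In case (C3), each $\langle x_i,S_i',x_{i+1}\rangle$ is matched either by a good path $\pi_i$ or, when $x_i=x_{i+1}\in S_i'$, by nothing. In the latter case we simply drop that step from the sequence. If all steps are dropped, then $x=y\in S$ and we are in (C1). We concatenate the remaining good paths on consecutive subintervals. Continuity-type conditions at the junctions $x_i$ need care: ($\Diamond$-forth) at a junction point holds because preimages of $R$-upsets are open on each side and the junction value is common. Condition (G) for the concatenation follows from (G) on the pieces and (F1), using that the junction values $x_1,\dots,x_{k-1}$ lie in the image of the open interval, together with monotonicity. The intermediate values lie in $S$ since $S_i'\subseteq S$ and $x_1,\dots,x_{k-1}\in S$.

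Case (C4) has $x\neq y\in S\subseteq R(x)$. By (C4b) and the induction hypothesis, every pair $z,w\in S$ is joined by a good path inside $S$, or $z=w$. We build a path $\delta$ on $(0,1]$ with $\delta(1)=y$ that visits every element of $S$ infinitely often as $t\to 0$. Concretely, on intervals $[2^{-n-1},2^{-n}]$ we place concatenations of good paths cycling through an enumeration of $S$, and then set $\delta(0)=x$. Since $S\subseteq R(x)$, ($\Diamond$-forth) holds at $0$. For (G) with $t=0$, we need $\langle x,\delta[(0,u)],\delta(u)\rangle\in P$. Here $\delta[(0,u)]=S$ because all of $S$ recurs near $0$, and (C4c) gives $\langle x,S,\delta(u)\rangle \in P$ for $\delta(u)\in S$. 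For $t>0$ we use (C3)-style concatenation.

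Case (C5) is the main obstacle. It needs a Cantor-set construction generalizing Example~\ref{E:cantor}. On $[0,1]$ we take a Cantor set $\mathcal{C}$ and map it onto $S\setminus U$, so that every point of $S\setminus U$ is attained densely in $\mathcal{C}$ and the endpoints go to $x$ and $y$. This is possible because, by (C5b), all points of $S\setminus U$ are $R$-reflexive and see all of $S$, so dense-in-itself behaviour is allowed. On each complementary gap $(\alpha_i,\beta_i)$ with endpoints $w,w'\in S\setminus U$, we insert an excursion into $S\cap U$. This excursion goes from $w$ to some $z\in S\cap U$ via the reversed good path for $\langle z,S\cap U,w_z\rangle$ given by (C5c), suitably combined with a good path within $S\setminus U$, and it must make every $z\in S\cap U$ appear in gaps accumulating at every point of $\mathcal{C}$. ($\Diamond$-forth) at points of $\mathcal{C}$ follows from (C5b). (G) for intervals meeting $\mathcal{C}$ reduces to (C5d), because such intervals have image equal to all of $S$ (every element recurs densely); the case where the image is a proper subset comes from the pieces via (F1). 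The delicate bookkeeping is arranging the gap excursions so that the images of intervals are exactly $S$ whenever they hit $\mathcal{C}$ in an interior point, while keeping the endpoint values consistent.
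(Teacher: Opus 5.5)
Your overall route is the paper's: well-founded induction on $\vartriangleleft$, reversal for (C2), concatenation for (C3), a path oscillating towards $0$ for (C4), and a Cantor-set path for (C5). Your treatment of (C1)--(C3), and of (C4) with $|S|\geq 2$, is essentially as in the paper. There are, however, two genuine gaps.

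\textbf{First gap: (F8) is never used.} The lemma is false without it. Take the frame of Example~\ref{E:positive} with $\langle a,a\rangle$ removed from $R$. It still satisfies (F1)--(F7), and Lemma~\ref{L:triplesOrdering} does not use (F8). Yet the triple $\langle b,\{a\},a\rangle\in P$ has $b\neq a$, and its only candidate match $\iota_{b,a}$ violates ($\Diamond$-forth) because $R(a)=\varnothing$.

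In your argument this surfaces in (C4) when $S=\{y\}$. There are no distinct $z,w\in S$, so (C4b) plus the induction hypothesis yields no paths at all. ``Cycling through $S$'' then makes $\delta$ constantly $y$ on $(0,1]$, which satisfies ($\Diamond$-forth) only if $y\mathrel{R}y$. This subcase must be handled separately with $\iota_{x,y}$:
\begin{itemize}
\item $x\mathrel{R}y$ comes from (C4a);
\item $y\mathrel{R}y$ comes from (F8), since $x\neq y$;
\item (G) follows from the triple itself together with (F2).
\end{itemize}

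\textbf{Second gap: the excursions in (C5).} You fix the values on $\mathcal{C}$ first. You then need ``a good path within $S\setminus U$'' joining an arbitrary gap endpoint $w$ to the base point $w_z$, and nothing supplies such a path. In Example~\ref{E:positive}, $\langle b,X,c\rangle$ is a (C5)-triple with $U=\{a\}$, and every triple of $P$ from $b$ to $c$ has middle set $X$. So any such connector would already be a path of exactly the kind you are trying to build. The ``delicate bookkeeping'' you postpone is therefore the missing idea, not a routine detail.

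The paper resolves this by making every excursion a loop:
\begin{itemize}
\item For $z\in S\cap U$, let $\pi_z$ be the reverse of the good path matching $\langle z,S\cap U,w_z\rangle$, followed by that path. Then $\pi_z(0)=\pi_z(1)=w_z$ and $z\in\pi_z[(0,1)]\subseteq S$.
\item Choose $f:\mathbb{Q}\to S$ with dense fibres, and put $\pi_{f(i)}$ on $[\alpha_i,\beta_i]$. When $f(i)\in S\setminus U$, use the constant $\iota_{f(i),f(i)}$ instead, which is good by (C5b) and (F2).
\item Send every remaining point of $[0,1)$ to $x$, and $1$ to $y$.
\end{itemize}
The values at gap endpoints are then dictated by the gaps themselves, so no path between distinct points of $S\setminus U$ is ever needed. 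Any $[t,u]$ not inside a single $[\alpha_i,\beta_i]$ meets two gaps, hence contains gaps of every label and has $\pi[(t,u)]=S$. Then (C5d) gives (G).
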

\begin{proof}
Fix a strict partial order $\vartriangleleft$ as in Lemma~\ref{L:triplesOrdering}. 
Since $X$ and $P$ are finite, this $\vartriangleleft$ is well-founded. The proof proceeds by $\vartriangleleft$-induction on $P$.

\NewCase\textbf{Case (C1).} Since $x=y \in S$, there is nothing to prove.

\NewCase\textbf{Case (C2).} Since $\langle y,S,x\rangle \vartriangleleft \langle x,S,y\rangle$, we have $x\neq y$. By the induction hypothesis, some good path $\pi$ matches $\langle y,S,x\rangle$. Now, $\pi \restriction [1,0]$ matches $\langle x,S,y\rangle$; it is a good path by (F4).

\NewCase\textbf{Case (C3).} Consider a requisite sequence $x_0,\dotsc,x_k$.
Without loss of generality, either $x_0\neq x_1 \neq \dotso \neq x_k$ or $k=1$, for we can omit duplicate adjacent $x_i$'s. 
If $k=1$ and $x=y \in S_0'$, then $x=y\in S$. Otherwise, for each $i<k$ the induction hypothesis yields a good path $\pi_i$ matching $\langle x_i,S_i',x_{i+1}\rangle$. Now the map $\pi:[0,1]\to X$ given by $\pi \restriction [\frac{i}{n}, \frac{i+1}n] = \pi_i$ matches $\langle x,S,y\rangle$. Since each $\pi_i$ satisfies ($\Diamond$-forth), $\pi$ also satisfies ($\Diamond$-forth). Since each $\pi_i$ satisfies (G), $\pi$ also satisfies (G) by (F1).

\NewCase\textbf{Case (C4) with $S=\{y\}$.}
The map $\iota_{x,y}$ matches $\langle x,S,y\rangle$. We have $x \mathrel{R} y$ by (C4a) and $y \mathrel{R} y$ by (F8), hence $\iota_{x,y}$ satisfies ($\Diamond$-forth). By (F2) we have $\langle y,\{y\},y\rangle \in P$, thus $\iota_{x,y}$ satisfies (G).

\NewCase\textbf{Case (C4) with $\{y\} \subsetneq S$.} 
Denote by $z_0,\dotsc,z_{k-1}$ an enumeration of $S$ such that $z_0=y$, and put $z_k = z_0$. For each $i<k$ we have $P \owns \langle z_{i+1}, S,z_{i}\rangle \vartriangleleft \langle x,S,y\rangle$ by (C4b) and $z_{i+1} \neq z_i$, thus the induction hypothesis yields a good path $\pi_i$ matching $\langle z_{i+1}, S, z_{i}\rangle$. 
Consider $\pi:[0,1]\to X$ given by $\pi(0)=x$ and $\pi \restriction [\frac1{i+2}, \frac1{i+1}] = \pi_{i \bmod k}$ for all $i \geq 0$. 
Clearly, $\pi$ matches $\langle x, S, y\rangle$. 
This $\pi$ satisfies ($\Diamd$-forth) at every $t \in (0,1]$ since $\pi_i$'s satisfy \mbox{($\Diamd$-forth)}, and at $t=0$ since $\pi[(0,1)]=S \subseteq R(x)$ by (C4a).
For all $u \in (0,1]$ we have $\langle \pi(0), \pi[(0,u)],\pi(u)\rangle = \langle x,S,\pi(u) \rangle \in P$ by (C4c). Condition (G) for $0<t<u\leq 1$ follows easily by (F1) from the fact that $\pi_i$'s satisfy (G).

\NewCase\textbf{Case (C5).} 
For every $z \in S\cap U$, we claim that there exists a good path $\pi_z$ with $z \in \pi_z[(0,1)] \subseteq S$ and $\pi_z(0)=\pi_z(1) \in S \setminus U$. 
Indeed, by (C5c) and the induction hypothesis, there exists a good path $\pi_z'$ matching $\langle z, S\cap U, w_z\rangle$ for some $w_z \in S \setminus U$. Take $\pi_z$ with $\pi_z \restriction [0,\frac12] = \pi_z' \restriction [1,0]$ and $\pi_z \restriction [\frac12,1] = \pi_z'$.

As in Example~\ref{E:cantor}, denote by $\mathcal{C} \subseteq [0,1]$ the standard Cantor set, and present $[0,1] \setminus \cl{C}$ as a union of pairwise-disjoint intervals $\bigcup_{i \in \mathbb{Q}} (\alpha_i,\beta_i)$, with $i<j \Eq \alpha_i<\alpha_j$. Fix $f: \mathbb{Q} \to S$ such that $f^{-1}(z)$ is dense in $\mathbb{Q}$ for every $z \in S$. 
Define $\pi:[0,1]\to X$ as follows:
\begin{itemize}
	\item $\pi \restriction [\alpha_i,\beta_i] = \pi_{f(i)}$ if $f(i) \in S\cap U$ and  $\pi\restriction [\alpha_i,\beta_i]=\iota_{f(i),f(i)}$ if $f(i) \in S \setminus U$;
	\item $\pi(t)=x$ for all $t \in [0,1) \setminus \bigcup_{i \in \mathbb{Q}} [\alpha_i,\beta_i]$, and $\pi(1)=y$.
\end{itemize}

Clearly, $\pi$ matches $\langle x, S, y\rangle$.
Consider $t \in [0,1]$. If $\pi(t)\in S\setminus U$, then $R(\pi(t))\supseteq S=\pi[[0,1]]$ by (C5a) and (C5b). Otherwise, some $i \in\mathbb{Q}$ satisfies $t \in (\alpha_i,\beta_i)$ and $\pi \restriction [\alpha_i,\beta_i]$ satisfies ($\Diamond$-forth). Therefore, $\pi$ satisfies ($\Diamond$-forth).
Towards (G), consider $0\leq t<u\leq 1$. If $[t,u] \subseteq [\alpha_i,\beta_i]$ for some $i$ with $f(i) \in S\cap U$, it suffices to apply (G) for $\pi_{f(i)}$. If $[t,u]\subseteq [\alpha_i,\beta_i]$ with $f(i) \in S\setminus U$, apply (F2). Finally, suppose $[t,u]$ is not contained in any of the intervals $[\alpha_i,\beta_i]$. Then $[t,u]$ intersects at least two of those intervals, say $[\alpha_i,\beta_i]$ and $[\alpha_j,\beta_j]$ with $i<j$. But now for every $z \in S$ there exists $k \in f^{-1}(z) \cap (i,j)$, thus $z \in \pi [ [\alpha_k,\beta_k] ] \subseteq \pi[(t,u)]$. Therefore, $\pi[(t,u)]=S$, hence $\langle \pi(t),\pi[(t,u)],\pi(u)\rangle=\langle \pi(t), S, \pi(u)\rangle \in P$ by (C5d).
\end{proof}

Now, given a suitable frame $\langle X,R, P\rangle$, we employ Lemma~\ref{L:matchingGoodPaths} to produce a metric ``tree'' $T$ and a path-morphism $f:T \to X$. Let us first outline our construction. To assure ($\Diamd$-back) and ($\gamma$-back), it suffices to define $T$ as the closure of $\{a_0\}$ with $f(a_0) := x_0$ under the following two operations:
\begin{itemize}[leftmargin=35pt]
	\item[($\Diamd$-lift)] if $a \in T$ and $f(a) \mathrel{R} y$, isometrically glue $\{0\} \cup \set{\frac1n}{n\geq 1}$ to $T$ with $a \sim 0$, taking $f(\frac1n):=y$ for all $n\geq 1$;
	\item[($\gamma$-lift)] if $a \in T$ and $\langle f(a),S,y\rangle \in P$, isometrically glue a real interval $[0,1]$ to $T$ with $a \sim 0$, taking $f \restriction [0,1] := \pi$, where $\pi:[0,1]\to X$ is a good path matching $\langle f(a), S,y\rangle$. We need not do this operation if $f(a) = y \in S$ (because in this case ($\gamma$-forth) is already witnessed by $\delta := \iota_{a,a}$); otherwise a requisite good path exists by Lemma~\ref{L:matchingGoodPaths}.
\end{itemize}

For this construction, we will show that properties ($\Diamond$-forth) and ($\gamma$-forth) transfer from individual good paths to the whole map $f$.

\begin{remark}
The metric space $T$ we construct is similar to a \emph{real tree} (as defined e.g. in~\cite{Janson}), except that in addition to ``edges'' isomorphic to $[0,1]$ we also have those isomorphic to $\{0\} \cup \set{\frac1n}{n \geq 1}$. 
We define the metric in terms of a height function $h$ and a meet-semilattice $\wedge$ of ``least common ancestors'' in our ``tree''; this is reminiscent of~\cite[Example 10.9]{Janson}. 
In a real tree, every path passes through all the points on the unique arc between its ends (cf. e.g.~\cite{Janson}); we essentially make a similar observation on our metric to verify ($\gamma$-forth).
\end{remark}

\begin{proposition}\label{P:networkTree}
If $\langle X,R,P\rangle$ is a suitable frame and $x_0 \in X$, then there exists a metrizable topological space $\langle T,\tau\rangle$ and a path-morphism $f: T \to X$ with $x_0 \in f[T]$.
\end{proposition}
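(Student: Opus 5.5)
We build $T$ as a countable ``tree'' by iterating the operations ($\Diamd$-lift) and ($\gamma$-lift) from a root $a_0$ with $f(a_0):=x_0$. The construction is done in stages. At stage $n$ we have a countable set $T_n$ with a map $f$, a height function $h:T_n\to[0,\infty)$, and a parent structure. Each newly glued edge $E$, a copy of either $[0,1]$ or $\{0\}\cup\set{\frac1n}{n\geq1}$, is attached at a point $a$ with $a\sim 0$. To make the metric bounded and the edges shrink, the $k$-th edge glued in the construction is scaled by a factor $\lambda_k$ with $\lambda_k\to0$; for instance, take length $2^{-k}$ in a fixed enumeration. Heights are assigned additively along edges.

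Every point $b\in T$ lies on a unique finite chain of edges descending from $a_0$. Define $b\wedge c$ as the highest common point of the two chains, and set
\[d(b,c):=h(b)+h(c)-2h(b\wedge c).\]
The triangle inequality follows as for real trees, giving a metric. We schedule all requests by dovetailing. For every point $a$ that ever appears and every $y\in R(f(a))$, we perform one $\Diamd$-lift. For every $\langle f(a),S,y\rangle\in P$ with $\neg(f(a)=y\in S)$, we perform one $\gamma$-lift using the good path from Lemma~\ref{L:matchingGoodPaths}. We then need to handle every open neighborhood $U$ of $a$ in ($\Diamd$-back). For this we perform the $\Diamd$-lift at $a$ with edge scale $\lambda$; the points $\lambda/n\to a$ lie in $U$ for large $n$ and map to $y$. ($\gamma$-back) is immediate, since the glued edge is a path in $T$ from $a$ that matches the triple. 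The degenerate case $f(a)=y\in S$ is handled by $\iota_{a,a}$.

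For ($\Diamd$-forth) at $b\in T$, we find a small ball around $b$. If $b$ is interior to a $[0,1]$-edge, we use ($\Diamd$-forth) for the good path there, since small balls stay inside the edge. The hard case is a branch point $b$, where infinitely many edges are attached. Every edge attached at $b$ starts at $f(b)$ and is either a good path or a $\Diamd$-edge. Each good path satisfies ($\Diamd$-forth) at $0$ with some radius. Only finitely many triples $\langle f(b),S,y\rangle$ and targets $y$ occur, so only finitely many good paths are chosen at $b$; we therefore arrange that all $\gamma$-lifts at $b$ reuse one of finitely many paths. The minimum radius is then positive, and points in deeper subtrees close to $b$ lie inside edges whose initial segments are within that radius. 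Here the scaling $\lambda_k\to0$ helps: subtrees hanging off an edge at height $t$ stay within distance about $t+\sum$ of lengths. To keep whole subtrees inside the ball, it is cleaner to make lengths decrease geometrically with depth. Then a subtree rooted at distance $t$ on an edge has diameter $O(t)$, hmm — this needs care. Choose instead each child edge glued at a point $c$ to have length at most $\frac13 d(c,\text{parent branch point})$ when $c\neq$ branch point. A subtree attached at $c$ then stays within the region where the good path's ($\Diamd$-forth) applies, and its images lie in $R(f(c))\subseteq R(f(b))$ by transitivity. For $\Diamd$-edge points $\frac1n$, whose images are $y$ with $f(b)Ry$, we again use transitivity for their subtrees.

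($\gamma$-forth) is the main obstacle. Given a path $\delta$ in $T$, the image $\delta[[0,1]]$ contains the unique arc between $\delta(0)$ and $\delta(1)$. This arc is a concatenation of finitely many segments of edges, traversed forward or backward, and it passes through finitely many branch points. Isolated points of $\Diamd$-edges cannot be interior to a nonconstant path, so $\Diamd$-edges only appear as constant excursions. The set $\delta[(0,1)]$ is the arc's interior plus possibly extra excursions into side subtrees, which return to the same point. We prove $\langle f(\delta(0)),f[\delta[(0,1)]],f(\delta(1))\rangle\in P$ by combining three facts. First, (G) holds for each edge segment, with reversal justified by (F4). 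Second, concatenation at points in the image is handled by (F1). Third, monotonicity of $P$ absorbs the extra points from excursions. The constant case $\delta(0)=\delta(1)$ with trivial arc is covered by (F2) plus monotonicity, with $f(\delta(0))\in f[\delta[(0,1)]]$ by continuity at the endpoint. This uses the real-tree property that $\delta$ must pass through all points of the arc, which follows because removing any arc point disconnects $T$ into components separating the endpoints. The endpoints of the arc may only be approached, not lying in $\delta(0,1)$, which is why we use (G) on open intervals.

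Finally, $T$ is metrizable by construction, and $x_0=f(a_0)\in f[T]$.
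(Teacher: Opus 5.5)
There is a genuine gap at ($\Diamd$-forth), and the edge scaling you introduce causes it. Both of your schemes make the edges glued at points $c$ near a given $b$ arbitrarily short. The first uses lengths $\lambda_k\to0$ along an enumeration; the second gives a child edge at $c$ at most a third of $c$'s distance to the branch point. You then rely on the claim that the whole subtree above $c$ is mapped into $R(f(c))$. That claim is false: a good path maps only an \emph{initial segment} into $R$ of its start, and its endpoint can be arbitrary.

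Here is a concrete failure. Take $X=\{x,z\}$, $R=\{\langle x,z\rangle,\langle z,z\rangle\}$, and let $\langle u,S,v\rangle\in P$ iff $u=v\in S$ or $z\in S$. One checks (F1)--(F8), so this frame is suitable. At a point $b$ with $f(b)=x$, the $\gamma$-edge for $\langle x,\{z\},x\rangle$ has points $c$ with $f(c)=z$ at distance $t\to0$ from $b$. At each such $c$, the $\gamma$-edge for $\langle z,\{z\},x\rangle$ ends at a point $e_c$ with $f(e_c)=x$. If that edge has length at most $t/3$ (or any length tending to $0$), then $e_c\to b$ while $x\notin R(x)$, so ($\Diamd$-forth) fails at $b$. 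Also, since lifts are performed at every point, no small ball around an interior edge point stays inside that edge.

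The remedy goes the other way: do not scale at all. Give every $[0,1]$-edge length $1$ and reuse one good path per triple everywhere, so that the set $\Gamma$ of good paths used is finite. Fix a single $\eps>0$ with $\pi[(0,\eps)]\subseteq R(\pi(0))$ for all $\pi\in\Gamma$, together with the local ($\Diamd$-forth) condition of the edge through $b$. Then any $b'$ with $d(b,b')<\eps$ is reached by moving less than $\eps$ along $b$'s own edge, followed by initial segments of parameter $<\eps$ of further edges or by $\Diamd$-edges. Transitivity of $R$ then gives $f(b)\mathrel{R}f(b')$. Boundedness of the metric is irrelevant to metrizability.

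Two further points need repair.

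First, the staged construction with countable sets $T_n$ and a dovetailed enumeration of requests does not make sense. A single glued copy of $[0,1]$ already contributes uncountably many points, each needing its own lifts. Instead, define $T$ at once as the set of finite sequences $\langle\pi_1,u_1,\dotsc,\pi_k,u_k\rangle$ with $\pi_1(0)=x_0$ and $\pi_{i+1}(0)=\pi_i(u_i)$; this performs every lift at every point simultaneously.

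Second, in ($\gamma$-forth) for a non-constant loop $\delta(0)=\delta(1)$, continuity puts $\delta(0)$ only in the closure of $\delta[(0,1)]$, not in it. For example, in the frame above a loop out along the $\langle x,\{z\},x\rangle$-edge and back has $f[\delta[(0,1)]]=\{z\}$, so (F2) plus monotonicity cannot give the triple. Instead, split at some $t$ with $\delta(t)\neq\delta(0)$, apply the arc argument to $\delta\restriction[0,t]$ and $\delta\restriction[t,1]$, and combine them by (F1).

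A smaller correction: the points $\frac1n$ of a $\Diamd$-edge are not isolated, since they carry subtrees. What you actually need is that the subtree above each of them is clopen. With these changes, the rest of your ($\gamma$-forth) outline matches the paper's argument: separating points of the arc lie in $\delta[(0,1)]$, (G) applies on segments, (F4) handles reversal, (F1) handles intermediate points, and monotonicity absorbs excursions.
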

\begin{proof}
In this proof, $a {}^\smallfrown b$ denotes the concatenation of tuples $a$ and $b$.

Using Lemma~\ref{L:matchingGoodPaths}, fix a set $\Gamma$ containing one good path matching each triple $\langle x,S,y\rangle \in P$, except those with $x=y \in S$. Denote by $T$ the set consisting of the empty tuple $\varnothing$ and all tuples $\langle \pi_1,u_1,\dotsc,\pi_k,u_k\rangle$ such that:
\begin{itemize}
	\item for each $i=1,\dotsc,k$, either:
	\begin{itemize}
		\item $\pi_i$ is an element of $\Gamma$ and $u_i \in (0,1]$; or
		\item $\pi_i = \iota_{x,y}$ for some $x,y \in X$ with $x \mathrel{R} y$, and $u_i = \frac1n$ for some integer $n>0$; and
\end{itemize}
	\item $\pi_1(0)=x_0$ and $\pi_{i}(0)=\pi_{i-1}(u_{i-1})$ for each $i=2,\dotsc,k$.
\end{itemize}
(Note that $\iota_{x,y}$ may or may not be in $\Gamma$. If $\pi_i =\iota_{x,y} \in \Gamma$, we allow any $u_i \in (0,1]$.)

For $a, b \in T$, put $a \prec b$ iff $b$ has a prefix $c {}^\smallfrown \langle \pi,u\rangle$ such that either $a =c$ or $a=c {}^\smallfrown \langle \pi,u'\rangle$ for some $u'< u$. Observe that $\prec$ is a partial order on $T$.

We claim that $\langle T,\prec\rangle$ is a meet-semilattice (i.e., that any two elements of the ``tree'' $\langle T,\prec\rangle$ have a least common ancestor). Indeed, consider $\prec$-incomparable $a,b \in T$. There are unique prefixes $c {}^\smallfrown \langle \pi_a,u_a\rangle$ and $c {}^\smallfrown\langle \pi_b,u_b\rangle$ of $a$ and $b$, respectively, with $\langle \pi_a,u_a\rangle\neq\langle\pi_b,u_b\rangle$. 
If $\pi_a=\pi_b$, then $c {}^\smallfrown \langle \pi_a, \min\{u_a,u_b\}\rangle$ is the meet of $a$ and $b$. Otherwise, $c$ is the meet of $a$ and $b$. We denote $\prec$-meets by $\wedge$.

Put $h(\langle \pi_1,u_1,\dotsc,\pi_k,u_k\rangle):=u_1+\dotsc+u_k$, $h(\varnothing):=0$, and $d(a,b) := h(a)+h(b)-2h(a\wedge b)$ for $a,b \in T$. 
We claim that $d$ is a metric on $T$. Indeed, $d$ is clearly symmetric. 
Since $h$ is $\prec$-monotone, $d$ is positive.
For all $a,b,c\in T$ we have $h(b) \geq \max\{h(a\wedge b),h(b\wedge c)\}$ and $h(a\wedge c) \geq \min\{h(a\wedge b), h(b\wedge c)\}$, thus $h(b)+h(a\wedge c)\geq h(a\wedge b)+h(b\wedge c)$, hence $d(a,b)+d(b,c)\geq d(a,c)$.

Denote by $\tau$ the topology on $T$ induced by $d$. 
Put $f(a^\smallfrown \langle \pi,u\rangle) := \pi(u)$ and $f(\varnothing) := x_0$.
Now we verify that $f$ is a path-morphism from $\langle T,\tau\rangle$ to $\langle X,R,P\rangle$.

\NewCase\textbf{Condition ($\Diamond$-back).} Consider $a \in T$, $y \in R(f(a))$, and $U \in \tau$ with $a\in U$. For every $n>0$, we have $b_{n} := a {}^\smallfrown \langle \iota_{f(a),y}, \frac1n\rangle \in T$ and $d(a,b_{n}) = \frac1n$. Therefore, $b_{n} \in U \setminus \{a\}$ for a sufficiently big $n$.

\NewCase\textbf{Condition ($\gamma$-back).}
Consider $a \in T$ and $\langle f(a), S,y\rangle \in P$.
If $f(a)=y \in S$, it suffices to take $\delta := \iota_{a,a}$. 
Otherwise, fix $\pi \in \Gamma$ that matches $\langle f(a),S,y\rangle$, put $\delta(0) := a$ and $\delta(u):=a{}^\smallfrown \langle\pi,u\rangle$ for $u\in (0,1]$, and observe that $\delta$ is isometric and thus continuous.

\NewCase\textbf{Condition ($\Diamond$-forth).} Consider $a \in T$.
If $a \neq \varnothing$, present $a$ as $a=a' {}^\smallfrown \langle \pi_a,u_a\rangle$. Since $\Gamma$ is finite and all its elements satisfy ($\Diamond$-forth), we can fix a sufficiently small $\eps>0$ such that: 
\begin{itemize}
	\item for every $\pi \in \Gamma$, we have $\pi[(0,\eps)] \subseteq R(f(\pi(0)))$;
	\item if $a\neq\varnothing$ and $\pi_a \in \Gamma$, then $\eps<u_a$ and $\pi_a[(u_a-\eps,\min\{u_a+\eps,1\}) \setminus \{u_a\}] \subseteq R(f(a))$;
	\item if $a \neq \varnothing$ and $\pi_a \not\in \Gamma$, then $\eps<u_a$ and $(u_a - \eps, u_a) \cup (u_a,u_a+\eps)$ is disjoint from $\set{\frac1n}{n\geq 1}$.
\end{itemize}
Consider $b \in T$ with $d(a,b)<\eps$; it suffices to show that $b=a$ or $f(a) \mathrel{R} f(b)$. As $d(a,{a\wedge b}) \leq d(a,b) < \eps$, we have either $a=\varnothing$ or $a\wedge b = a' {}^\smallfrown \langle \pi_a,u\rangle$ for some $u \in (u_a-\eps, u_a]$. 
Present $b$ as $c {}^\smallfrown \langle \pi_1,u_1,\dotsc,\pi_k,u_k\rangle$ with $k\geq 0$, where $c = \varnothing$ if $a=\varnothing$ and $c = a' {}^\smallfrown \langle \pi_a,u'\rangle$ for some $u'\geq u$ otherwise.
In the latter case, $0\leq u'-u \leq d(a \wedge b,b)<\eps$, whence $u' \in (u_a-\eps,u_a+\eps)$, hence either $c=a$ (if $u'=u_a$) or $f(a) \mathrel{R} f(c)$ (otherwise). 
If $k=0$, there is now nothing to prove. Otherwise, for each $i<k$ with $\pi_i \in \Gamma$ we have $u_i \leq d(a\wedge b,b)<\eps$, whence $\pi_i(0) \mathrel{R} \pi_i(u_i)$ by definition of $\eps$. For each $i<k$ with $\pi_i \not\in \Gamma$, we have $\pi_i=\iota_{x,y}$ for some $x \mathrel{R} y$, thus also $\pi_i(0) \mathrel{R} \pi_i(u_i)$. By transitivity of $R$, it follows that $f(c) \mathrel{R} f(b)$. Since $c=a$ or $f(a) \mathrel{R} f(c)$, we conclude that $f(a) \mathrel{R} f(b)$.

\NewCase\textbf{Condition ($\gamma$-forth).} Consider a path $\delta$ in $\langle T,\tau\rangle$.
First observe the following for every $a^* \in T$:
\begin{itemize}
	\item[(i)] the set $T \setminus \{a^*\}$ is the disjoint union of the following open sets:
\begin{itemize}
	\item $\set{b \in T }{ a^* \not\preceq b}$;
	\item $\set{b \in T }{ \text{$a^* {}^\smallfrown \langle \pi\rangle$ is a prefix of $b$}}$, for some $\pi$;
	\item $\set{b \in T} { \text{$a^* = a' {}^\smallfrown \langle \pi,u'\rangle$ and $a' {}^\smallfrown \langle \pi,u\rangle$ is a prefix of $b$ for some $u>u'$}}$;
\end{itemize}
thus, by connectedness of $[0,1]$, either $a^* \in \delta[[0,1]]$ or $\delta[[0,1]]$ is a subset of one of these sets.
	\item[(ii)] if $a^* = a' {}^\smallfrown \langle \pi,u'\rangle$ with $\pi \not\in \Gamma$, then $\set{b \in T}{a^* \preceq b}$ is open. Hence $\delta[[0,1]]$ is a subset either of said set or of its complement, by connectedness of $[0,1]$.
\end{itemize}

Now, to establish ($\gamma$-forth) for $\delta$, consider 5 cases:
\begin{itemize}
\item\emph{Case 1A: $\delta(1) = a{}^\smallfrown \langle \pi,u\rangle$ and either $\delta(0)=a{}^\smallfrown \langle \pi,u'\rangle$ with $u'<u$ or $\delta(0)=a$.} 
If $\delta(0)=a$, put $u' := 0$. Since $\delta(0) \prec \delta(1)$, by (ii) applied to $a^* := \delta(1)$ we have $\pi \in \Gamma$, thus by (G) $\langle \pi(u'),\pi[(u',u)],\pi(u)\rangle \in P$. For every $u'' \in (u',u)$, we have $\delta(0) \prec a {}^\smallfrown \langle \pi,u''\rangle \prec\delta(1)$, hence $a {}^\smallfrown \langle \pi,u''\rangle \in \delta[(0,1)]$ by (i), whence $\pi(u'') = f(a {}^\smallfrown \langle \pi,u''\rangle) \in f[\delta[(0,1)]]$. Therefore, $\pi[(u',u)] \subseteq f[\delta[(0,1)]]$. As $P$ is monotone, we obtain $\langle f(\delta(0)), f[\delta[(0,1)]], f(\delta(1))\rangle \in P$.

\item\emph{Case 1: $\delta(0) \prec\delta(1)$.} 
By induction on the length of $\delta(1)$. Present $\delta(1)$ as $a {}^\smallfrown \langle \pi,u\rangle$. If Case 1A does not apply, we have $\delta(0)\prec a$. By (i) for $a^* := a$, we obtain $a =\delta(t)$ for some $t\in (0,1)$. It remains to apply the induction hypothesis to $\delta \restriction [0,t]$ and Case 1A to $\delta \restriction [t,1]$, and use (F1).

\item\emph{Case 2: $\delta(1) \prec\delta(0)$.} Apply Case 1 to $\delta \restriction [1,0]$ and use (F4).

\item\emph{Case 3: $\delta(0) \not\prec\delta(1)$, $\delta(1) \not\prec\delta(0)$, and $\delta(0)\neq\delta(1)$.} 
Applying (i) to $a^* := \delta(0) \wedge \delta(1)$, we obtain $\delta(t)=\delta(0)\wedge \delta(1)$ for some $t \in (0,1)$. 
Now apply Case 2 to $\delta \restriction [0,t]$ and Case 1 to $\delta\restriction [t,1]$, and use (F1).

\item\emph{Case 4: $\delta(0) = \delta(1)$.} If $\delta$ is constant, it suffices to use (F2). 
Otherwise, fix some $t \in (0,1)$ with $\delta(t) \neq\delta(0)$, apply previous cases to $\delta \restriction [0,t]$ and $\delta \restriction [t,1]$, and use (F1).
\end{itemize}
\end{proof}

Summarizing this and the preceding sections, we have the following:

\begin{theorem}\label{T:decidability}
For a formula $\phi \in \Lang$, the following are equivalent:
\begin{itemize}	
	\item[(1)] $\phi$ is valid over the class of $T_1$ topological spaces;
	\item[(2)] $\phi$ is valid over the class of metric spaces;
	\item[(3)] $\phi$ is valid over the class of suitable frames of size at most $2^{|\Sub\phi|}$; 
	\item[(4)] $\slg{TLR} \vdash \phi$.
\end{itemize}
The set of formulas with these properties is decidable.
\end{theorem}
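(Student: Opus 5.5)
I will prove the cycle of implications (4)$\Rightarrow$(1)$\Rightarrow$(2)$\Rightarrow$(3)$\Rightarrow$(4) and then read off decidability.

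\emph{(4)$\Rightarrow$(1)} is exactly Proposition~\ref{P:soundness}. \emph{(1)$\Rightarrow$(2)} holds because every metric space induces a $T_1$ topology.

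\emph{(2)$\Rightarrow$(3)} goes by contraposition, and this is the only step that assembles Section~\ref{S:unraveling}. Suppose $\phi$ is refuted in a suitable frame $\langle X,R,P\rangle$ under some valuation $\fval{\cdot}$ at a point $x_0$. By Proposition~\ref{P:networkTree} there is a metrizable space $\langle T,\tau\rangle$, say with metric $d$, and a path-morphism $f:T\to X$ together with a point $a\in T$ such that $f(a)=x_0$. Pull back the valuation via $\fval{p}_T:=f^{-1}[\fval{p}]$. Lemma~\ref{L:pathMorphism} then gives $a\notin\fval{\phi}_T$, so $\phi$ fails in the metric space $\langle T,d\rangle$. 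The size bound in (3) plays no role in this direction: any suitable frame refuting $\phi$ suffices.

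\emph{(3)$\Rightarrow$(4)} is again by contraposition and is exactly Proposition~\ref{P:flankedFMP}. If $\slg{TLR}\nvdash\phi$, then $\phi$ is refuted in a suitable frame of size at most $2^{|\Sub\phi|}$.

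For decidability, (3) yields a decision procedure. Enumerate, up to isomorphism, all flanked Kripke frames whose carrier $X$ has size at most $2^{|\Sub\phi|}$. Since $X$ is finite, $R\subseteq X^2$ and $P\subseteq X\times\mathcal P(X)\times X$ are finite objects, so there are finitely many such frames. For each frame, check the conditions (F1)--(F8), all of which quantify only over finite sets (including $R$-upsets and subsets $S$). For each suitable frame found, check all valuations of the variables occurring in $\phi$ and evaluate $\phi$ by its finite recursive clauses. The truth of $\phi$ depends only on these variables, so this is enough. This procedure decides (3), hence decides all four equivalent conditions.

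There is no real obstacle here, since the substantive work sits in Propositions~\ref{P:flankedFMP} and~\ref{P:networkTree}. The one point needing care is in (2)$\Rightarrow$(3): the metric topology on $T$ must be exactly the $\tau$ to which the path-morphism conditions refer, and the refuting point $x_0$ must lie in $f[T]$. Proposition~\ref{P:networkTree} guarantees both.
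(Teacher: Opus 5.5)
Your proposal is correct and follows the paper's proof: the same cycle of implications via Proposition~\ref{P:soundness}, the fact that metric spaces are $T_1$, Lemma~\ref{L:pathMorphism} combined with Proposition~\ref{P:networkTree}, and Proposition~\ref{P:flankedFMP}, with decidability read off from (3). Your explicit description of the finite search over frames and valuations simply spells out what the paper compresses into ``Decidability follows from (3).''
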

\begin{proof}
(1) implies (2) since all metric spaces are $T_1$; (2) implies (3) by Lemma~\ref{L:pathMorphism} and Proposition~\ref{P:networkTree}; (3) implies (4) by Proposition~\ref{P:flankedFMP}; (4) implies (1) by Proposition~\ref{P:soundness}. Decidability follows from (3).
\end{proof}

\section{C-semantics}\label{S:interiorModality}

In this section, we axiomatize the c-semantical fragment of $\slg{TLR}$ and observe that it is sound for the class of all topologies. Denote by $\Langc \subseteq \Lang$ the set of formulas that may be written in terms of $\bot$, $\to$, $\gamma$, and $\Boxc$, without explicit occurrences of $\Boxd$. 

\begin{definition}
Denote by $\slg{TLR_c}$ the $\Langc$-calculus consisting of:
\begin{itemize}
	\item all axioms and rules of $\slg{S4}$ for $\Boxc$;
	\item axioms (A0)--(A7) and rule (Mon).
\end{itemize}
\end{definition}

\begin{lemma}\label{L:interiorTranslation}
For $\xi \in \Lang$, denote by $\xi^\# \in \Langc$ the result of substituting each $\Box$ with $\Boxc$. 

1) If $\xi \in \Langc$, then $\slg{TLR_c} \vdash \xi \leftrightarrow \xi^\#$.

2) If $\slg{TLR}\vdash\xi$, then $\slg{TLR_c}\vdash \xi^\#$. 
\end{lemma}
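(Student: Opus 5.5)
We prove both parts by induction, relying throughout on a replacement-of-equivalents principle for $\slg{TLR_c}$. The principle states: if $\slg{TLR_c}\vdash\alpha\leftrightarrow\beta$, then $\slg{TLR_c}\vdash\chi\leftrightarrow\chi'$ whenever $\chi'$ arises from $\chi\in\Langc$ by replacing occurrences of $\alpha$ with $\beta$. We establish it first, by induction on $\chi$. The Boolean steps are propositional. The $\Boxc$ step uses necessitation and the $\slg{K}$-axiom for $\Boxc$ from $\slg{S4}$. The $\gamma$ step applies (Mon) in both directions to the two implications $\phi\to\phi'$, $\psi\to\psi'$ and their converses. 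We also record the key identity $\slg{TLR_c}\vdash(\phi\wedge\Boxc\phi)\leftrightarrow\Boxc\phi$, which holds by the $\slg{S4}$ axiom $\Boxc\phi\to\phi$.

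\emph{Part 1.} An element of $\Langc$ is built from $\PV$ and $\bot$ by $\to$, $\gamma$ and $\Boxc$. Read as an $\Lang$-formula, each $\Boxc\phi$ is $\phi\wedge\Boxd\phi$. Under $\#$ it becomes $\phi^\#\wedge\Boxc\phi^\#$, and the outer occurrence of $\Boxc$ here is itself an abbreviation of $\phi^\#\wedge\Boxd\phi^\#$, which $\#$ does not touch again. We argue by induction on the $\Langc$-construction of $\xi$. Atoms and $\bot$ are fixed by $\#$. The $\to$ and $\gamma$ cases follow from the induction hypothesis together with replacement of equivalents; for $\gamma$ this is exactly (Mon) in both directions. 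In the $\Boxc$ case, the induction hypothesis gives $\phi\leftrightarrow\phi^\#$, hence $\Boxc\phi\leftrightarrow\Boxc\phi^\#$ by $\slg{S4}$ monotonicity. By the key identity, $\Boxc\phi^\#\leftrightarrow\phi^\#\wedge\Boxc\phi^\#=(\Boxc\phi)^\#$, which closes the case.

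\emph{Part 2.} We argue by induction on $\slg{TLR}$-derivations, noting that $\#$ commutes with $\to$ and $\gamma$ and sends $\Boxd$ to $\Boxc$.
\begin{itemize}
\item Modus ponens, necessitation and (Mon) translate to the same rules in $\slg{TLR_c}$.
\item The $\slg{K}$- and $\slg{4}$-axioms for $\Boxd$ become $\slg{S4}$-theorems for $\Boxc$.
\item Instances of (A0)--(A5) contain no $\Boxd$ except inside the schematic formulas, so their translations are again instances of the same axioms.
\item For (A6) and (A7) the abbreviation $\Boxc\psi$ unfolds to $\psi\wedge\Boxd\psi$, which translates to $\psi^\#\wedge\Boxc\psi^\#$. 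By the key identity and replacement of equivalents, each translated instance is provably equivalent to the genuine (A6) or (A7) instance with $\psi^\#,\phi_i^\#,\chi^\#$ substituted. The abbreviations $\widehat\gamma$ are built from $\wedge$ and $\gamma$ and cause no extra trouble.
\item The only genuinely new case is (A8). Its translation is equivalent, again by the key identity, to $\gamma(\phi^\#\wedge\Boxc\neg\phi^\#,\top)\to\phi^\#$. By the $\slg{S4}$ axiom $\Boxc\neg\phi^\#\to\neg\phi^\#$ we get $\vdash\phi^\#\wedge\Boxc\neg\phi^\#\to\bot$. Then (Mon) yields $\gamma(\phi^\#\wedge\Boxc\neg\phi^\#,\top)\to\gamma(\bot,\top)$, which is refutable by (A3), so the implication holds vacuously.
\end{itemize}

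The main obstacle is bookkeeping rather than content. One must be careful that $\#$ acts on the unabbreviated $\Lang$-form, so that $\Boxc$ nested inside (A6) and (A7) becomes $\phi\wedge\Boxc\phi$ rather than $\Boxc\phi$. This is why we need the replacement lemma to hold inside $\gamma$-contexts, which is exactly what (Mon) provides.
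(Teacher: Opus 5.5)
Your proposal is correct and follows essentially the same route as the paper: part 1 rests on $\slg{S4}\vdash\phi\wedge\Boxc\phi\leftrightarrow\Boxc\phi$ plus replacement of equivalents, which you rightly make explicit via (Mon) for $\gamma$-contexts. Part 2 is the same induction on derivations, with (A8) dispatched by (A3) and (Mon). There are two cosmetic differences: the paper handles (A0)--(A7) uniformly by applying part 1 to a propositional-variable instance and then substituting, whereas you apply the key identity directly to the translated instances; and the translation of (A8) is literally $\gamma(\phi^\#\wedge\Boxc\neg\phi^\#,\top)\to\phi^\#$, so the key identity is not needed there.
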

\begin{proof}
1) Since $\xi^\#$ is obtained from $\xi$ by substituting each occurrence of $\Boxc\phi =\phi\wedge\Box\phi$ with $\phi\wedge \Boxc\phi$, it suffices to note that $\slg{S4} \vdash \phi \wedge\Boxc\phi \leftrightarrow \Boxc\phi$. 

\NewCase 2) By induction on the derivation of $\slg{TLR}\vdash\xi$.
If $\xi$ is an axiom of $\slg{K4}$ or is obtained by a rule of $\slg{K4}$ or is obtained by (Mon), it suffices to note that those axioms and rules are also among the axioms and rules of $\slg{TLR_c}$ with $\Boxc$ substituted for $\Box$. 
If $\xi=\gamma(\phi\wedge\Boxd\neg\phi,\top)\to\phi$ is an instance of (A8), then $\slg{S4} \vdash \phi^\#\wedge\Boxc\neg\phi^\#\to\bot$ and $\slg{TLR_c} \vdash \gamma(\bot,\top)\to\phi^\#$ by (A3), hence $\slg{TLR_c} \vdash \gamma(\phi^\# \wedge \Boxc\neg\phi^\#, \top)\to\phi^\#$ by (Mon).
Now suppose $\xi$ is an instance of (A0)--(A7). Since $\Box$ does not occur in those axioms (except as $\Boxc$), we can present $\xi$ as $\chi[p_1/\psi_1,\dotsc,p_k/\psi_k]$, where $\chi \in \Langc$ is itself an occurrence of (A0)--(A7). Since $\slg{TLR_c} \vdash \chi$, we have $\slg{TLR_c} \vdash \chi^\#$ by (1), hence  
$\slg{TLR_c} \vdash \chi^\#[p_1/\psi_1^\#,\dotsc,p_k/\psi_k^\#]=\xi^\#$.
\end{proof}

\begin{theorem}
The logic $\slg{TLR_c}$ is decidable, sound for the class of all topological spaces, and complete for the class of metric spaces.
\end{theorem}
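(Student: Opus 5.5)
Soundness comes first. The $\slg{S4}$ axioms for $\Boxc$ are valid in every topological space, since $\fval{\Boxc\phi}$ is the interior of $\fval{\phi}$. By Proposition~\ref{P:soundness}, axioms (A0)--(A7) and rule (Mon) are valid in all topologies. Modus ponens and necessitation for $\Boxc$ preserve validity. Hence every theorem of $\slg{TLR_c}$ is valid in all topological spaces.

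Completeness will be reduced to Theorem~\ref{T:decidability} via the translation of Lemma~\ref{L:interiorTranslation}. Let $\xi\in\Langc$ be valid over all metric spaces. Then $\xi$, viewed as an $\Lang$-formula, is valid over all metric spaces in the d-semantics. This is because $\Boxc\phi$ abbreviates $\phi\wedge\Boxd\phi$, whose topological meaning is exactly the interior, so the c-reading and the d-reading of $\xi$ agree pointwise. By Theorem~\ref{T:decidability}, $\slg{TLR}\vdash\xi$. Lemma~\ref{L:interiorTranslation}(2) then gives $\slg{TLR_c}\vdash\xi^\#$, and Lemma~\ref{L:interiorTranslation}(1) gives $\slg{TLR_c}\vdash\xi\leftrightarrow\xi^\#$, hence $\slg{TLR_c}\vdash\xi$.

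The argument has one subtlety to check. Formally, $\xi\in\Langc$ is an $\Lang$-formula in which $\Boxd$ occurs only inside the pattern $\phi\wedge\Boxd\phi$. Lemma~\ref{L:interiorTranslation}(1) is stated for exactly such formulas, so nothing further is needed there. Semantic agreement is immediate from the abbreviation.

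Decidability then follows, together with the converse direction. For $\xi\in\Langc$ the following are equivalent: $\slg{TLR_c}\vdash\xi$; $\xi$ is valid in all topological spaces; $\xi$ is valid in all metric spaces; and $\slg{TLR}\vdash\xi$. The forward implications hold by soundness and by the inclusion of metric spaces among all spaces. The step from metric validity back to $\slg{TLR_c}$ was shown above, and it passes through $\slg{TLR}\vdash\xi$. Conversely, if $\slg{TLR}\vdash\xi$, then $\slg{TLR_c}\vdash\xi$ by Lemma~\ref{L:interiorTranslation}. So $\slg{TLR_c}$-derivability of $\xi$ coincides with $\slg{TLR}$-derivability of $\xi$, which is decidable by Theorem~\ref{T:decidability}; for instance, one checks validity of $\xi$ on the finitely many suitable frames of size at most $2^{|\Sub\xi|}$.

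There is no real obstacle in any of these steps. The only point needing care is that Lemma~\ref{L:interiorTranslation} must be applied to $\xi$ itself rather than to an arbitrary $\Lang$-formula.
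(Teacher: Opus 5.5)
Your proposal is correct and follows the paper's own route. Soundness comes from $\slg{S4}$ being valid for interior together with Proposition~\ref{P:soundness}. Completeness obtains $\slg{TLR}\vdash\xi$ from Theorem~\ref{T:decidability} and transfers it to $\slg{TLR_c}$ via both parts of Lemma~\ref{L:interiorTranslation}. Decidability reduces to Theorem~\ref{T:decidability}, and your explicit equivalence $\slg{TLR_c}\vdash\xi$ iff $\slg{TLR}\vdash\xi$ for $\xi\in\Langc$ (forward direction via soundness) just spells out what the paper leaves implicit.
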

\begin{proof}
Soundness of $\slg{S4}$ is well-known~\cite{BB07}. Soundness of (A0)--(A7) and (Mon) follows from Proposition~\ref{P:soundness}. If $\xi \in \Langc$ is valid over the class of metric spaces, then $\slg{TLR} \vdash \xi$ by Theorem~\ref{T:decidability}, whence $\slg{TLR_c} \vdash \xi$ by Lemma~\ref{L:interiorTranslation}. Decidability follows by Theorem~\ref{T:decidability}.
\end{proof}

\section{Final Remarks}

In this paper we axiomatized the logic of all topological spaces in the language with path-reachability $\gamma$ and the interior modality, but in the language enriched by the Cantor derivative we only addressed $T_1$ spaces.
As illustrated in Example~\ref{E:allTopologies}, the interaction of Cantor derivative with $\gamma$ in non-$T_1$ spaces has extra phenomena not caught by the axioms of this paper. To establish completeness for all spaces, it might be useful to enrich the tree-like construction of Proposition~\ref{P:networkTree} with ``jump'' edges isomorphic to the 2-point Sierpiński space.
Another problem, already mentioned in~\cite{BBCFG24}, is to axiomatize the $\gamma$-logic of the Euclidean plane.

\textbf{Acknowledgments.} The research on which this publication is based received financial support from project grant PID2023-149556NB-I00 and CEX2021-001169-M, funded by MICIU\slash{}AEI\slash{}10.13039\slash{}501100011033.
The work of the first author was supported by University of Barcelona collaboration grant, call 2025.4.FF.1.

\nocite{*}
\bibliographystyle{eptcs}
\bibliography{generic}

\end{document}